\documentclass[mathematics,article,accept,pdftex,oneauthor]{mdpi_template/Definitions/mdpi}

\firstpage{1}
\pubvolume{1}
\issuenum{1}
\articlenumber{0}
\pubyear{2026}
\copyrightyear{2026}
\datereceived{ }
\daterevised{ }
\dateaccepted{ }
\datepublished{ }

\makeatletter
\let\@origmaketitle\@maketitle
\renewcommand{\@maketitle}{%
  \let\origincludegraphics\includegraphics
  \renewcommand{\includegraphics}[2][]{%
    \def\@tempa{##2}%
    \@expandtwoargs\in@{.eps}{\@tempa}%
    \ifin@ \else \origincludegraphics[##1]{##2}\fi
  }%
  \@origmaketitle
  \let\includegraphics\origincludegraphics
}
\fancypagestyle{plain}{%
  \fancyhf{}%
  \fancyfoot[C]{\footnotesize\thepage}%
}
\renewcommand{\cright}{}
\renewcommand{\leftcolDates}{}
\renewcommand{\leftcolumnUpdatelogo}{}
\renewcommand{\leftcolumnCorrstatement}{}
\renewcommand{\leftcolumnCright}{}
\makeatother

\newgeometry{left=2.5cm, right=2.5cm, top=1.27cm, bottom=1.08cm,
  includehead, includefoot, footskip=36pt, headsep=24pt}
\fancyheadoffset{0pt}
\fancyfootoffset{0pt}
\usepackage{siunitx}
\usepackage{placeins}
\DeclareMathOperator{\erfcx}{erfcx}
\DeclareMathOperator{\erfc}{erfc}
\DeclareMathOperator{\erf}{erf}

\Title{Faster Monotone Implied Volatility Solver}

\Author{Fabien Le Floc'h\orcidA{}}

\AuthorNames{Fabien Le Floc'h}

\address{%
Independent researcher; fabien@2ipi.com}

\corres{Correspondence: fabien@2ipi.com}

\abstract{%
We present ThiopheneIV, a Black--Scholes implied-volatility solver with a monotone core and explicit production guards. The solver starts from the simple  Choi--Huh--Su L3 lower-bound seed and applies three 
Euler--Chebyshev steps on a lower branch and three Halley steps on the remaining upper branch.  We prove
that, in exact arithmetic, the seed lies below the root and both maps increase
monotonically without overshooting.  We also detail the practical challenges encountered for a double-precision implementation: parity normalisation, microscopic Bachelier-limit handling, saturated price treatment, and an optional
Jäckel–Newton polish.  Across standard grids,
market-like data, high-volatility cases, and
adversarial corners, ThiopheneIV agrees closely with multiprecision Black
reference prices at low latency.  We provide detailed comparisons with recent
solvers, including J\"ackel's \emph{Let's Be Rational}.  The broader lesson is
that a convergence proof gives a clean core, but robust production inversion
still depends on boundary handling and on the pricing objective one chooses to
match.}

\keyword{implied volatility; Black--Scholes model; monotone convergence;
Euler--Chebyshev method; Halley method; numerical option pricing}

\begin{document}

\section{Introduction}
\label{sec:intro}

Implied volatility inversion recovers the Black--Scholes volatility $\sigma$
from an option price.  It is a small scalar root-finding problem, but it is
called so often in calibration and risk systems that both latency and tail
robustness matter.  The numerical difficulty is not merely local convergence of
Newton-like iterations.  Deep out-of-the-money prices can underflow, nearly
in-the-money prices can lose time value before the solver is called, and
high-volatility prices close to the upper bound are poorly conditioned in
volatility.

J\"ackel's \emph{Let's Be Rational}~\cite{jaeckel2017} remains the natural
reference implementation for this domain.  It combines normalised prices,
region-dependent asymptotics, complementary objectives, and carefully selected
iterations.  Choi, Huh, and Su~\cite{choi2023} derive implied-volatility bounds
from option-delta
inequalities and prove monotone convergence of Newton's method from their lower
bounds.

This paper studies a production solver built around the Choi--Huh--Su L3 lower
bound.  We call it ThiopheneIV.  The solver uses the L3 formula as a global
non-Bachelier seed.  After reducing the input to a normalised out-of-the-money
call price $c$ and total volatility $v=\sigma\sqrt{T}$, it refines the
smaller-tail log objective: $\ln(c)$ for $c\le1/2$ and $\ln(1-c)$ for
$c>1/2$.  At an iterate $v_n$, write
\[
  \eta_n=-\frac{F(v_n)}{F'(v_n)},\qquad
  \lambda_n=\frac{F(v_n)F''(v_n)}{[F'(v_n)]^2}.
\]
On the lower tail,
$F(v)=\ln(c(x,v))-\ln(c_{\mathrm{target}})$ and ThiopheneIV uses the
Euler--Chebyshev method of order three~\cite[p.~81]{traub1982},
\[
  v_{n+1}=v_n+\eta_n\left(1+\frac{\lambda_n}{2}\right).
\]
On the upper tail,
$F(v)=\ln(1-c_{\mathrm{target}})-\ln(1-c(x,v))$ and ThiopheneIV uses Halley's
method in the closely related rational form
\[
  v_{n+1}=v_n+\frac{\eta_n}{1-\lambda_n/2}.
\]
Starting from the Choi L3 lower-bound seed, both maps are monotone increasing
and bounded above by the true root in exact arithmetic
(Appendices~\ref{app:chebyshev-convergence} and
\ref{app:loggap-halley-convergence}).

The paper contributes an explicit normalised L3-seeded algorithm, monotone
no-overshoot proofs for its lower-tail Euler--Chebyshev and upper-tail Halley
maps, and the production boundary handling needed for robust double-precision
use.  Accuracy and latency are compared against multiprecision Black references
and recent implied-volatility solvers, including J\"ackel's reference
implementation.
\section{Problem Formulation}
\label{sec:problem}

\subsection{Normalization}
\label{sec:normalization}

All prices are first reduced to the same normalised OTM representation.
Let $C$ and $P$ denote undiscounted call and put prices with forward
$F$, strike $K$, and time to expiry~$T$.  We call an option at the money
(ATM) when $F=K$, and near ATM when $|\ln\left(F/K\right)|$ is small.  A call is
out of the money (OTM) when $F<K$, while a put is OTM when $F>K$; the
opposite cases are in the money (ITM).  Rather than carrying separate
call, put, ITM, and OTM formulae through the solver, we order the
forward--strike pair as
\[
  F_* = \min(F,K), \qquad K_* = \max(F,K), \qquad
  \frac{F_*}{K_*}=\min(F/K,K/F) \le 1.
\]
The corresponding OTM value is obtained by put--call parity, and by
exchanging the forward and strike for OTM puts:
\[
C_{\mathrm{OTM}}=
\begin{cases}
  C, & \text{call and } F\le K,\\
  C-(F-K), & \text{call and } F>K,\\
  P, & \text{put and } F>K,\\
  P-(K-F), & \text{put and } F<K.
\end{cases}
\]
Thus every admissible input is represented as an undiscounted OTM call
on $(F_*,K_*)$.  The variables passed to the scalar inversion are
\begin{linenomath}
\begin{equation}\label{eq:normalise}
  x = \ln\left(\frac{F_*}{K_*}\right) \le 0, \qquad
  e^x = \frac{F_*}{K_*}, \qquad
  c = \frac{C_{\mathrm{OTM}}}{F_*}, \qquad
  v = \sigma\sqrt{T}.
\end{equation}
\end{linenomath}

We also use the sqrt-forward normalised price
\begin{linenomath}
\begin{equation}\label{eq:beta-normalise}
  \beta = c e^{x/2} = \frac{C_{\mathrm{OTM}}}{\sqrt{F_*K_*}},
  \qquad c=\beta e^{-x/2}.
\end{equation}
\end{linenomath}
This is only a change of price scale; the solver input remains the
OTM-forward-normalised price~$c$.

This normalisation removes intrinsic value before inversion.  In
particular, originally ITM options are priced through their OTM parity
legs, avoiding the cancellation that would occur if one subtracted a
large intrinsic component from the original option price.

Here and throughout, $\Phi$ denotes the standard normal cumulative
distribution function.

In these coordinates the Black formula for the normalised OTM call price is
\begin{linenomath}
\begin{equation}\label{eq:black}
  c(x, v) = \Phi\left(\frac{x}{v} + \frac{v}{2}\right)
    - e^{-x}\Phi\left(\frac{x}{v} - \frac{v}{2}\right).
\end{equation}
\end{linenomath}

\subsection{Tail-Log Objective and erfcx Decomposition}
\label{sec:log-price-objective}
Following J\"ackel~\cite{jaeckel2017} and
Choi et~al.~\cite{choi2023}, the iteration is performed in logarithmic price
space rather than on the raw price residual.  For a target normalised
price $c_{\mathrm{target}}$, ThiopheneIV uses the smaller tail
\begin{linenomath}
\begin{equation}\label{eq:objective}
  f(v)=
  \begin{cases}
    \ln\left(c(x,v)\right)-\ln\left(c_{\mathrm{target}}\right),
      & c_{\mathrm{target}}\le 1/2,\\[3pt]
    \ln\left(1-c(x,v)\right)-\ln\left(1-c_{\mathrm{target}}\right),
      & c_{\mathrm{target}}>1/2.
  \end{cases}
\end{equation}
\end{linenomath}

\begin{Proposition}\label{prop:logc}
For $h = x/v$ and $t = v/2$, the log-price admits the
representation
\begin{equation}\label{eq:logc}
  \ln\left(c\right) = -\tfrac{1}{2}(h^2 + t^2) - \ln\left(2\right) - \tfrac{x}{2}
           + \ln\left(N^+ - N^-\right),
\end{equation}
where
\begin{equation}\label{eq:erfcx}
  N^+ = \erfcx\left(-(h+t)/\sqrt{2}\right), \qquad
  N^- = \erfcx\left(-(h-t)/\sqrt{2}\right).
\end{equation}
\end{Proposition}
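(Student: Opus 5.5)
The plan is to substitute the definition of $\erfcx$ directly into the Black formula \eqref{eq:black} and factor out a common Gaussian term. With $h=x/v$ and $t=v/2$, the two normal arguments are $h+t$ and $h-t$. For each term I will use the identity
\[
\Phi(z)=\tfrac12\erfc\left(-z/\sqrt2\right)=\tfrac12\,e^{-z^2/2}\erfcx\left(-z/\sqrt2\right),
\]
which follows from $\erfcx(u)=e^{u^2}\erfc(u)$ together with $\Phi(z)=\tfrac12\erfc(-z/\sqrt2)$. This gives
\[
c=\tfrac12 e^{-(h+t)^2/2}N^+-\tfrac12 e^{-x}e^{-(h-t)^2/2}N^- .
\]

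The key step is to show that both exponential prefactors coincide. Expanding the squares, $(h\pm t)^2=h^2+t^2\pm 2ht$, and $ht=(x/v)(v/2)=x/2$. Hence
\[
-\tfrac12(h+t)^2=-\tfrac12(h^2+t^2)-\tfrac{x}{2},
\qquad
-x-\tfrac12(h-t)^2=-\tfrac12(h^2+t^2)+\tfrac{x}{2}-x=-\tfrac12(h^2+t^2)-\tfrac x2 .
\]
Both terms therefore carry the common factor $\tfrac12\exp\!\left(-\tfrac12(h^2+t^2)-\tfrac x2\right)$, and
\[
c=\tfrac12\exp\!\left(-\tfrac12(h^2+t^2)-\tfrac x2\right)(N^+-N^-).
\]
Taking logarithms yields \eqref{eq:logc}.

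The only point needing care is that the logarithm is defined, i.e.\ that $N^+-N^->0$. This follows from $c>0$ for $v>0$ (the Black price of an OTM call is positive) and from the positivity of the exponential prefactor. Equivalently, $\erfcx$ is strictly decreasing and $-(h+t)/\sqrt2<-(h-t)/\sqrt2$ because $t>0$. Beyond this, the argument is routine algebra; the main thing to get right is the cancellation $ht=x/2$, which turns the factor $e^{-x}$ into the same prefactor.
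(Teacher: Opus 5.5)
Your proposal is correct and follows essentially the same route as the paper: rewrite both $\Phi$ terms as $\tfrac12\erfc$, then as $e^{-z^2}\erfcx$, use $ht=x/2$ to identify the common factor $\tfrac12 e^{-(h^2+t^2)/2-x/2}$, and take logarithms. Your additional check that $N^+-N^->0$, either from $c>0$ or from the strict monotonicity of $\erfcx$ with $t>0$, is a small point the paper leaves implicit.
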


\begin{proof}
Using~\eqref{eq:black} and
$\Phi\left(z\right) = \tfrac{1}{2}\erfc\left(-z/\sqrt{2}\right)$ gives
\[
  c = \tfrac{1}{2}\bigl[
    \erfc\left(-(h+t)/\sqrt{2}\right)
    - e^{-x}\,\erfc\left(-(h-t)/\sqrt{2}\right)
  \bigr].
\]
Now write $\erfc\left(z\right) = e^{-z^2}\erfcx\left(z\right)$.  Since $ht=x/2$,
\[
  e^{-(h+t)^2/2}=e^{-(h^2+t^2)/2-x/2},
  \qquad
  e^{-x}e^{-(h-t)^2/2}=e^{-(h^2+t^2)/2-x/2}.
\]
The two terms therefore share the common factor
$e^{-(h^2+t^2)/2-x/2}/2$.  Factoring it out and taking logarithms
gives~\eqref{eq:logc}.
\end{proof}

\begin{Proposition}\label{prop:loggap}
For the same $h=x/v$ and $t=v/2$, the complementary log-price admits the
representation
\begin{linenomath}
\begin{equation}\label{eq:loggap}
  \ln\left(1-c\right)
  = -\tfrac{1}{2}(h+t)^2 - \ln\left(2\right)
    + \ln\left(M^+ + M^-\right),
\end{equation}
\end{linenomath}
where
\begin{equation}\label{eq:erfcx-gap}
  M^+ = \erfcx\left((h+t)/\sqrt{2}\right), \qquad
  M^- = \erfcx\left(-(h-t)/\sqrt{2}\right).
\end{equation}
\end{Proposition}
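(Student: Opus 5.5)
We follow the same route as in the proof of Proposition~\ref{prop:logc}, but first rewrite $1-c$ so that it is a \emph{sum} of two normal tails rather than a difference. Using $1-\Phi(z)=\Phi(-z)$ in~\eqref{eq:black}, we obtain
\[
  1-c = \Phi\left(-(h+t)\right) + e^{-x}\Phi\left(h-t\right).
\]
This identity is just the rearrangement $1-\Phi(h+t)+e^{-x}\Phi(h-t)$, so no cancellation is hidden in it. Both terms are positive, which is why the representation involves $M^++M^-$ rather than a difference.

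Next we substitute $\Phi(z)=\tfrac12\erfc(-z/\sqrt2)$ and $\erfc(z)=e^{-z^2}\erfcx(z)$. This gives
\[
  1-c=\tfrac12\Bigl[e^{-(h+t)^2/2}\erfcx\bigl((h+t)/\sqrt2\bigr)
  + e^{-x}e^{-(h-t)^2/2}\erfcx\bigl(-(h-t)/\sqrt2\bigr)\Bigr].
\]

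The key step is to check that the two Gaussian prefactors coincide, exactly as in Proposition~\ref{prop:logc}. Since $ht=x/2$, we have
\[
  -(h-t)^2/2 - x = -(h^2+t^2)/2 + ht - 2ht = -(h^2+t^2)/2 - ht = -(h+t)^2/2 .
\]
Hence both terms carry the factor $e^{-(h+t)^2/2}$. Factoring out $\tfrac12 e^{-(h+t)^2/2}$ and taking logarithms yields~\eqref{eq:loggap}. The argument $\ln(M^++M^-)$ is well defined because $\erfcx>0$ everywhere.

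There is no real obstacle here. The only points requiring care are the sign conventions in the $\erfcx$ arguments, namely $-z$ for $\Phi(-(h+t))$, which gives $+(h+t)/\sqrt2$, and the exponent bookkeeping. One small check is worth recording: the common exponent here is $-(h+t)^2/2$, whereas Proposition~\ref{prop:logc} uses $-(h^2+t^2)/2-x/2$. These agree, since $-(h+t)^2/2=-(h^2+t^2)/2-x/2$, so the two prefactors are in fact the same.
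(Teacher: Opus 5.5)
Your proof is correct and follows the paper's argument: write $1-c$ as a sum of two normal tails, convert them to $\erfcx$ form, and use $ht=x/2$ to identify the common factor $\tfrac12 e^{-(h+t)^2/2}$. Your exponent bookkeeping, the signs of the $\erfcx$ arguments, and your consistency check against the prefactor in Proposition~\ref{prop:logc} are all right.
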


\begin{proof}
Using~\eqref{eq:black},
\[
  1-c = \tfrac{1}{2}\erfc\left((h+t)/\sqrt{2}\right)
      + \tfrac{1}{2}e^{-x}\erfc\left(-(h-t)/\sqrt{2}\right).
\]
After writing $\erfc(z)=e^{-z^2}\erfcx(z)$, the two terms share the common
factor $e^{-(h+t)^2/2}/2$, because $ht=x/2$ implies
$e^{-x}e^{-(h-t)^2/2}=e^{-(h+t)^2/2}$.  Factoring it out and taking logarithms
gives~\eqref{eq:loggap}.
\end{proof}

Equation~\eqref{eq:loggap} is used whenever $c_{\mathrm{target}}>1/2$.  It avoids
the loss of scale in $\ln(c)$ near the upper price bound while requiring the
same two $\erfcx$ evaluations as~\eqref{eq:logc}.

\begin{Remark}
The scaled complementary error function
$\erfcx\left(z\right) = e^{z^2}\erfc\left(z\right)$ is bounded and positive for all
real~$z$.  Consequently, the difference $N^+ - N^-$ remains computable
even when the price itself would underflow to zero in double precision.
This is the main reason for using the log-price formulation: the
objective~$f(v)$ remains well defined and smoothly differentiable for
arbitrarily deep OTM options.
\end{Remark}

Equations~\eqref{eq:logc} and~\eqref{eq:loggap} are the default ThiopheneIV
iteration objectives: the unpolished solver selects the smaller logarithmic
tail, applying Euler--Chebyshev steps on $\ln(c)$ and Halley steps on
$\ln(1-c)$.  The expanded J\"ackel branch in
Appendix~\ref{app:expanded-black} is used separately as the benchmark reference
price and as the optional final-polish target.

The same numerical choices also matter before inversion, when prices are
produced or used as targets.  Table~\ref{tab:pricing-diagnostics} compares
three double-precision pricing paths against a 512-bit reference price: the
textbook normal-CDF formula~\eqref{eq:black}, the beta-normalised $\erfcx$/log
formula~\eqref{eq:logc}, and the expanded J\"ackel reference branch of
Appendix~\ref{app:expanded-black}. 

The CDF path uses a normal CDF implemented
through Apache Commons Numbers \texttt{erfc}; the $\erfcx$/log path evaluates
the sqrt-forward-normalised price $\beta=c\sqrt{e^x}$ with Apache Commons
Numbers \texttt{erfcx} and then divides by $\sqrt{e^x}$; and the expanded path
uses the J\"ackel-style branch.  The plotting script draws these Java-benchmark values and
supplies the high-precision error scale.  The examples are concrete OTM
equity-style quotes, written in terms of the forward moneyness $F_*/K_*$,
expiry, annualised volatility, and total volatility $v=\sigma\sqrt{T}$.

The $\erfcx$ dependency should be a high-quality scaled complementary error
function, not the literal product $e^{z^2}\erfc(z)$ in the positive tail.
We found the implementations of Apache
Commons Numbers to be of the highest quality (the best accuracy). Cody's implementation from
Netlib is nearly as good, while Johnson's implementation part of \emph{Julia}'s \texttt{SpecialFunctions.jl} is less accurate, but may still be satisfactory (especially with the polishing step) for the double-precision regimes tested
here (see Appendix~\ref{sec:erfcx-impact}).  

\begin{table}[!htbp]
\caption{Concrete double-precision Black pricing diagnostics for the
OTM-normalised call price $c=C_{\mathrm{OTM}}/F_*$.  The scenario column gives
the market-style expiry/volatility description when one is used; the input
column always lists the starting $(F_*/K_*,v)$.  The reference price is shown at the starting $v$; the error
columns report the maximum relative error over 512 successive double-precision
values of $v$ starting there, relative to a high-precision reference price.\label{tab:pricing-diagnostics}}
\scriptsize
\setlength{\tabcolsep}{3pt}
\begin{tabularx}{\textwidth}{lcccccX}
\toprule
\textbf{Scenario} & $(F_*/K_*,v)$ & $c_{\mathrm{ref}}$ &
\textbf{CDF} & \textbf{erfcx/log} & \textbf{Expanded J\"ackel} & \textbf{Lesson} \\
\midrule
Two-week 7.5\% OTM, 15\% vol
& $(0.925,2.99{\times}10^{-2})$
& $4.42{\times}10^{-5}$
& $1.6{\times}10^{-13}$
& $5.2{\times}10^{-14}$
& $1.9{\times}10^{-15}$
& Tail cancellation is already visible in the CDF path; the branched reference is cleaner. \\
Monotonicity diagnostic, $v=5\%$
& $(0.955,5.00{\times}10^{-2})$
& $4.94{\times}10^{-3}$
& $2.7{\times}10^{-14}$
& $1.7{\times}10^{-14}$
& $7.1{\times}10^{-16}$
& This later figure case has small absolute pricing errors, but still tens to hundreds of price ulps. \\
One-day 0.5\% OTM, 10\% vol
& $(0.995,6.30{\times}10^{-3})$
& $7.65{\times}10^{-4}$
& $1.6{\times}10^{-13}$
& $1.1{\times}10^{-13}$
& $5.1{\times}10^{-16}$
& The beta-normalised erfcx path is good with Commons erfcx; the expanded branch gives the reference. \\
\bottomrule
\end{tabularx}
\end{table}

Figure~\ref{fig:pricing-formula-accuracy-diagnostic} visualises two of these
one-ulp sweeps in total volatility.  In these practical cases,
the beta-normalised $\erfcx$/log formula improves on the textbook CDF path, while the
branched reference remains cleaner still.

\begin{figure}[!htbp]
\centering
\includegraphics[width=\textwidth]{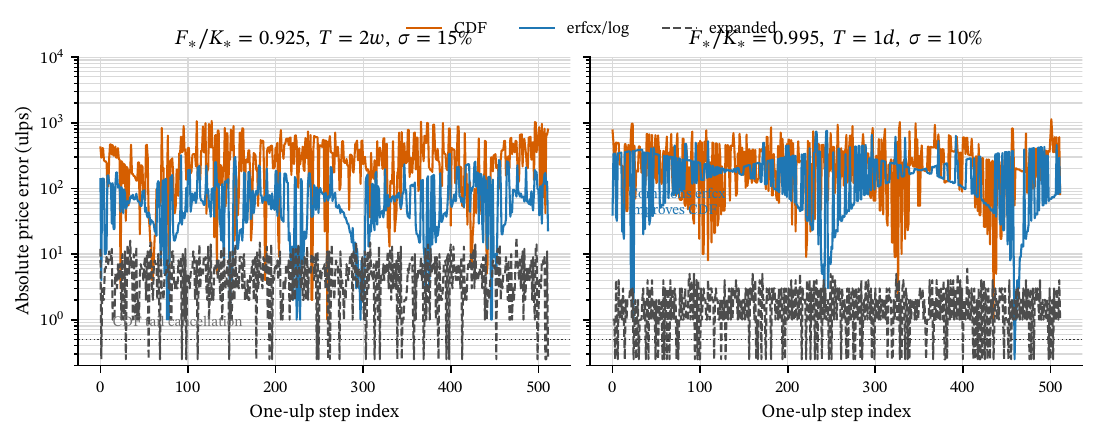}
\caption{Pricing-formula accuracy diagnostic against a high-precision reference.
Each panel advances $v$ by 512 successive double-precision values and plots
absolute price error in ulps of the high-precision OTM-normalised price.  The
left panel shows a realistic short-expiry OTM quote where the CDF formula loses
tail bits; the right panel shows a near-ATM short-expiry quote where a
high-quality $\erfcx$ implementation avoids the larger CDF error, but the
expanded branch is still the last-bit reference.
\label{fig:pricing-formula-accuracy-diagnostic}}
\end{figure}

Figure~\ref{fig:pricing-formula-monotonicity-diagnostic} separates the signed
price-step diagnostic because it is visually easier to read on its own and it
illustrates a different point.  Even when all three formulas are accurate in
absolute terms, the double-precision price sequence need not be monotone under
successive one-ulp increases of total volatility.  This happens for the CDF,
beta-normalised $\erfcx$/log, and expanded J\"ackel-style evaluations; higher precision
reduces cancellation but does not turn a rounded double-precision evaluation
into the real-valued monotone function.

\begin{figure}[!htbp]
\centering
\includegraphics[width=\textwidth]{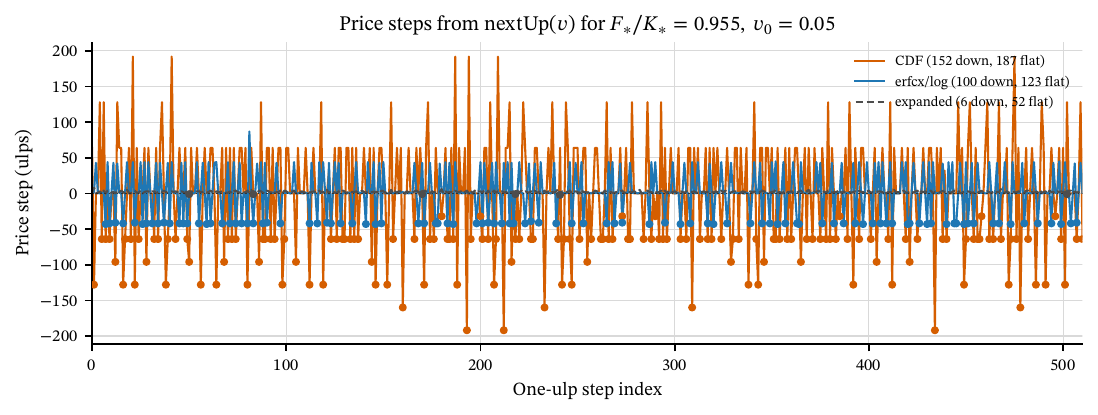}
\caption{One-ulp price-step diagnostic for $F_*/K_*=0.955$ and $v=0.05$.
Each curve shows the local OTM-price step produced by advancing total volatility
by one double-precision value.  Negative points mark local downward steps
despite the monotonicity of the real-valued Black price.
\label{fig:pricing-formula-monotonicity-diagnostic}}
\end{figure}

Table~\ref{tab:pricing-global} repeats the comparison on practical quoted-price
grids rather than irrelevant deep tails.  It confirms the same qualitative
pattern: the CDF path is limited by tail cancellation, the Commons $\erfcx$/log
path is usually cleaner in OTM regimes, and the expanded branch remains the
reference evaluation for pricing diagnostics and optional polish.

\begin{table}[!htbp]
\caption{Regime-level assessment of direct Black pricing formulas against a
200-decimal-digit reference on practical OTM quote grids.  The broad grid uses
$0.80\le F_*/K_*\le0.995$, expiries from one day to one year, and annualised
volatilities from 10\% to 120\%.  The near-ATM grid uses
$0.99\le F_*/K_*\le0.9995$, expiries from one day to two weeks, and annualised
volatilities from 5\% to 80\%.  Deep tails below $c=10^{-12}$ are omitted here.
\label{tab:pricing-global}}
\scriptsize
\setlength{\tabcolsep}{3pt}
\begin{tabularx}{\textwidth}{lccccX}
\toprule
\textbf{Regime} & \textbf{Points} & \textbf{CDF p99} & \textbf{erfcx/log p99} & \textbf{Expanded p99} & \textbf{Conclusion} \\
\midrule
Broad OTM grid, $10^{-8}\le c\le5{\times}10^{-2}$
& 390 & $3.9{\times}10^{-13}$ & $5.4{\times}10^{-14}$ & $2.0{\times}10^{-15}$
& The Commons-erfcx path reduces CDF cancellation, and the reference branch is cleaner. \\
Tiny quoted premiums, $10^{-12}\le c<10^{-8}$
& 19 & $2.1{\times}10^{-12}$ & $8.9{\times}10^{-14}$ & $5.9{\times}10^{-15}$
& Tail cancellation increasingly favours the log/expanded formulations. \\
Near-ATM short-dated grid, $c\le5{\times}10^{-2}$
& 186 & $2.8{\times}10^{-13}$ & $8.8{\times}10^{-14}$ & $4.6{\times}10^{-16}$
& Direct erfcx/log is useful, but the expanded branch is the last-bit reference. \\
\bottomrule
\end{tabularx}
\end{table}

This improvement should not be confused with a guarantee of strict monotonicity
in floating-point arithmetic.  The real Black price is monotone in total
volatility, but a particular double-precision evaluation need not be monotone
when the input is advanced by one ulp (unit in the last place, the spacing
between adjacent double-precision numbers at the value under discussion); this
distinction is highlighted in~\cite{lefloc2024monotonicity}.  The expansion is
valuable for reducing cancellation, but Figure~\ref{fig:pricing-formula-monotonicity-diagnostic}
shows that it should not be read as a floating-point monotonicity guarantee.


\section{Solver Design}
\label{sec:solver}

\subsection{Log-Price Derivatives}
\label{sec:derivatives}

ThiopheneIV refines total volatility with the selected tail-log objective from
Section~\ref{sec:log-price-objective}.  Let
\[
  g(v)=\ln\left(c(x,v)\right)-\ln\left(c_{\mathrm{target}}\right),
  \qquad h=x/v,
  \qquad t=v/2.
\]
The derivatives needed by the Euler--Chebyshev correction follow from the
same erfcx decomposition used to evaluate the objective.

\begin{Proposition}\label{prop:derivatives}
In the log-price formulation, the first three derivatives of
$g_0(v)=\ln\left(c(x,v)\right)$ admit closed-form expressions:
\begin{align}
  g_0'(v) &= \frac{2/\sqrt{2\pi}}{N^+ - N^-},
  \label{eq:logvega}\\[3pt]
  \frac{g_0''(v)}{g_0'(v)} &= \frac{(h+t)(h-t)}{v} - g_0'(v),
  \label{eq:d2}\\[3pt]
  \frac{g_0^{(3)}(v)}{g_0'(v)} &=
    \frac{-3h^2 - t^2 + (h^2-t^2)^2}{v^2}
    - 3\,g_0'(v)\,\frac{g_0''(v)}{g_0'(v)}
    - \bigl[g_0'(v)\bigr]^2.
  \label{eq:d3}
\end{align}
\end{Proposition}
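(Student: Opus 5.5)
The plan is to differentiate the Black formula~\eqref{eq:black} directly in $v$ and then convert to the erfcx variables of Proposition~\ref{prop:logc}. The key fact is the vega identity. With $d_\pm = h\pm t$, the relation $ht=x/2$ gives $\varphi(d_+) = e^{-x}\varphi(d_-)$, so the derivative of the arguments $d_\pm$ cancel. The result is
\[
  \partial_v c = \varphi(d_+) = \frac{1}{\sqrt{2\pi}}e^{-(h^2+t^2)/2 - x/2}.
\]
Dividing by $c$ as written in~\eqref{eq:logc}, the common Gaussian factor $e^{-(h^2+t^2)/2-x/2}$ cancels against the one in the representation of $c$. The factor $1/2$ in $c$ becomes the $2$ in the numerator, giving $g_0' = (2/\sqrt{2\pi})/(N^+-N^-)$, which is~\eqref{eq:logvega}.

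For the second derivative I will write $g_0' = \partial_v c / c$, so that
\[
  g_0'' = \frac{\partial_v^2 c}{c} - (g_0')^2,
  \qquad
  \frac{g_0''}{g_0'} = \frac{\partial_v^2 c}{\partial_v c} - g_0'.
\]
It then suffices to compute the log-derivative of the vega, $\partial_v \ln\varphi(d_+) = -d_+\,\partial_v d_+$. Since $d_+ = x/v + v/2$, we have $\partial_v d_+ = -x/v^2 + 1/2 = (-h+t)/v$. Hence $-d_+\partial_v d_+ = (h+t)(h-t)/v$, which yields~\eqref{eq:d2}.

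For the third derivative, let $q = \partial_v^2 c/\partial_v c = (h^2-t^2)/v$, so that $\partial_v^2 c = q\,\partial_v c$ and $\partial_v^3 c = (q' + q^2)\,\partial_v c$. From $c\,g_0' = \partial_v c$ one has $g_0'' + (g_0')^2 = q g_0'$. Differentiating this identity gives
\[
  g_0''' + 2g_0'g_0'' = q' g_0' + q g_0'',
\]
and dividing by $g_0'$ leads to
\[
  \frac{g_0'''}{g_0'} = q' + q\frac{g_0''}{g_0'} - 2g_0''.
\]
Substituting $q = g_0''/g_0' + g_0'$ and simplifying should give $q' + q^2 - 3g_0'\,(g_0''/g_0') - (g_0')^2$.

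The remaining step, and the main bookkeeping obstacle, is to verify that $q' + q^2 = \bigl(-3h^2 - t^2 + (h^2-t^2)^2\bigr)/v^2$. Using $h' = -h/v$ and $t' = 1/(2v)\cdot$ (so $t' = t/v$), we get $(h^2-t^2)' = (-2h^2 - 2t^2)/v$. Therefore
\[
  q' = \frac{-2h^2 - 2t^2}{v^2} - \frac{h^2 - t^2}{v^2} = \frac{-3h^2 - t^2}{v^2},
\]
and adding $q^2 = (h^2-t^2)^2/v^2$ gives exactly the claimed first term of~\eqref{eq:d3}. This is the step where sign or factor slips are most likely, so I would recheck it by expanding $q$ directly in $x$ and $v$ as $q = x^2/v^3 - v/4$ and differentiating.
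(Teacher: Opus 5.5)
Your proposal is correct and follows the same route as the paper. Both divide the vega identity $\partial_v c=\phi(d_1)$ by the erfcx form of $c$ to get \eqref{eq:logvega}, and obtain the other two formulas by differentiating $g_0'=\partial_v c/c$; where the paper only says ``differentiating and simplifying,'' you carry this out explicitly via $g_0''+(g_0')^2=q\,g_0'$ with $q=(h^2-t^2)/v$, which checks out. There are two cosmetic slips: in the vega step it is the $-x/v^2$ parts of $\partial_v d_\pm$ that cancel while the $\pm\tfrac12$ parts add to $1$, and $t'=1/2$ rather than $1/(2v)$, which still equals the $t/v$ you actually use.
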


\begin{proof}
The vega of the normalised Black formula is
$\partial c / \partial v = \phi\left(d_1\right)$, where
$\phi$ is the standard normal density.  Thus
$g_0'(v)=c^{-1}\phi\left(d_1\right)$.  Expressing
$\phi\left(d_1\right)=(2\pi)^{-1/2}e^{-d_1^2/2}$ and using
Equation~\eqref{eq:erfcx} gives~\eqref{eq:logvega};
Equations~\eqref{eq:d2} and~\eqref{eq:d3} follow by differentiating
with respect to~$v$ and simplifying.
\end{proof}

For the complementary branch, write $S=M^++M^-$ and
$\ell_q(v)=\ln(1-c(x,v))$.  The common exponential factor in
\eqref{eq:loggap} cancels from the Newton and second-derivative ratios, giving
\begin{align}
  \ell_q'(v) &= -\frac{2}{\sqrt{2\pi}\,S},
  \label{eq:loggap-vega}\\[3pt]
  \frac{\ell_q''(v)}{\ell_q'(v)}
  &= -(h+t)\left(\frac{1}{2}-\frac{x}{v^2}\right)
     + \frac{2}{\sqrt{2\pi}\,S}.
  \label{eq:loggap-d2}
\end{align}
Thus the upper-tail Halley step needs the same two $\erfcx$ values and one
logarithm as the direct log-price step; no additional exponential evaluation is
required.

Once the two $\erfcx$ values for the selected tail are available from the
objective evaluation, the derivative ratios involve only elementary arithmetic.
The Chebyshev and Halley corrections therefore cost little more than a Newton
step with the same $\erfcx$ calls, but have cubic local convergence at the
simple root.

\subsection{Choi--Huh--Su L3 Seed}
\label{sec:choi-seed}

For $x\le0$, set $k=-x\ge0$ and $E=e^k$.  The Choi--Huh--Su L3 seed maps the
normalised OTM price to
\begin{linenomath}
\begin{equation}\label{eq:choi-l3-main}
  p_3=\frac{c(c+E)}{2c+E-1},\qquad z_3=\Phi^{-1}\left(p_3\right),
\end{equation}
\end{linenomath}
and then solves $z_3=d_1(v)=-k/v+v/2$ for a positive total volatility.  The
root is evaluated in rationalised form when $z_3<0$ to avoid cancellation;
Appendix~\ref{app:choi-l3} gives the exact transcription, including the ATM
limit.

The seed is not selected because it is the most accurate raw approximation.
Its advantage is structural: Choi et~al.~\cite{choi2023} prove that this value
is a lower bound for the admissible implied volatility.  ThiopheneIV uses that
lower-side property as the entry point for a monotone refinement rather than as
a merely empirical basin guess.  The inverse normal CDF is evaluated by a
piecewise rational approximation; subsequent cubic steps and the optional
J\"ackel--Newton polish absorb the last-bit
differences introduced by that approximation.

\subsection{Cubic Refinement}
\label{sec:chebyshev}

Given a finite positive seed $v_n$ below the root, define the Newton
displacement and the usual dimensionless Chebyshev--Halley curvature parameter
\[
  \eta_n=-\frac{g(v_n)}{g'(v_n)},\qquad
  \lambda_n=\frac{g(v_n)g''(v_n)}{[g'(v_n)]^2}.
\]
On the lower-tail log-price branch, ThiopheneIV applies the Euler--Chebyshev
method of order three~\cite[p.~81]{traub1982} in the compact form
\begin{linenomath}
\begin{equation}\label{eq:cheb-main}
  v_{n+1}=v_n+\eta_n\left(1+\frac{1}{2}\lambda_n\right).
\end{equation}
\end{linenomath}
The production path uses three full-precision refinement steps on either
objective branch.  Appendix~\ref{app:chebyshev-convergence}
proves that, in real arithmetic, the map in~\eqref{eq:cheb-main} satisfies
$0<T(v)-v\le v_*-v$ for every $v\in(0,v_*)$ when $x\le0$ on the lower-tail
log-price branch.  Therefore the corresponding sequence from the Choi L3
lower-bound seed is monotone increasing, bounded by the true root, and
convergent.  The upper-tail branch instead applies Halley's method to the
complementary objective, using the derivative ratios in
\eqref{eq:loggap-vega}--\eqref{eq:loggap-d2} to avoid loss of scale when
$c>1/2$:
\begin{linenomath}
\begin{equation}\label{eq:halley-main}
  v_{n+1}=v_n+\frac{\eta_n}{1-\frac{1}{2}\lambda_n},
\end{equation}
\end{linenomath}
where $\eta_n=-(\ell_q(v_n)-\ell_q(v_*))/\ell_q'(v_n)>0$ and
$\lambda_n=(\ell_q(v_n)-\ell_q(v_*))\ell_q''(v_n)/[\ell_q'(v_n)]^2$.
Appendix~\ref{app:loggap-halley-convergence} records the corresponding
monotone-convergence result.

\subsection{Optional J\"ackel--Newton Polish}
\label{sec:nj-polish}

The default ThiopheneIV objective is the selected erfcx/tail-log formula.  When the goal
is instead to match the expanded J\"ackel reference price as closely as
possible, one final Newton correction is applied on the lower-price half
$c\le1/2$ to the residual in the sqrt-forward normalised price
$\beta=c e^{x/2}$ from~\eqref{eq:beta-normalise}.  For $c>1/2$, the core solver is already using the complementary
$\ln(1-c)$ objective, and a final direct-price residual can be less well
conditioned than the quantity being inverted.  The correction uses the reference
price from Appendix~\ref{app:expanded-black}, converted to the same
sqrt-forward normalisation, and the analytic normalised vega.
The cutoff $c\le1/2$ is an objective-conditioning cutoff: above the midpoint the
small quantity is $1-c$, so the complementary log objective remains better
conditioned than a direct price residual.  Section~\ref{sec:polish-diagnostic}
separates this cutoff from J\"ackel's expanded price-evaluation regions and
measures the numerical effect of moving it.

\subsection{Guards and Hot Path}
\label{sec:guards}

Figure~\ref{fig:thiophene-branch-flow} summarises the input-domain branch map.
The guards detailed in Appendix~\ref{sec:guards_and_branches} handle adversarial inputs rather than only smooth benchmark
grids.
\begin{figure}[!htbp]
\centering
\includegraphics[width=\textwidth]{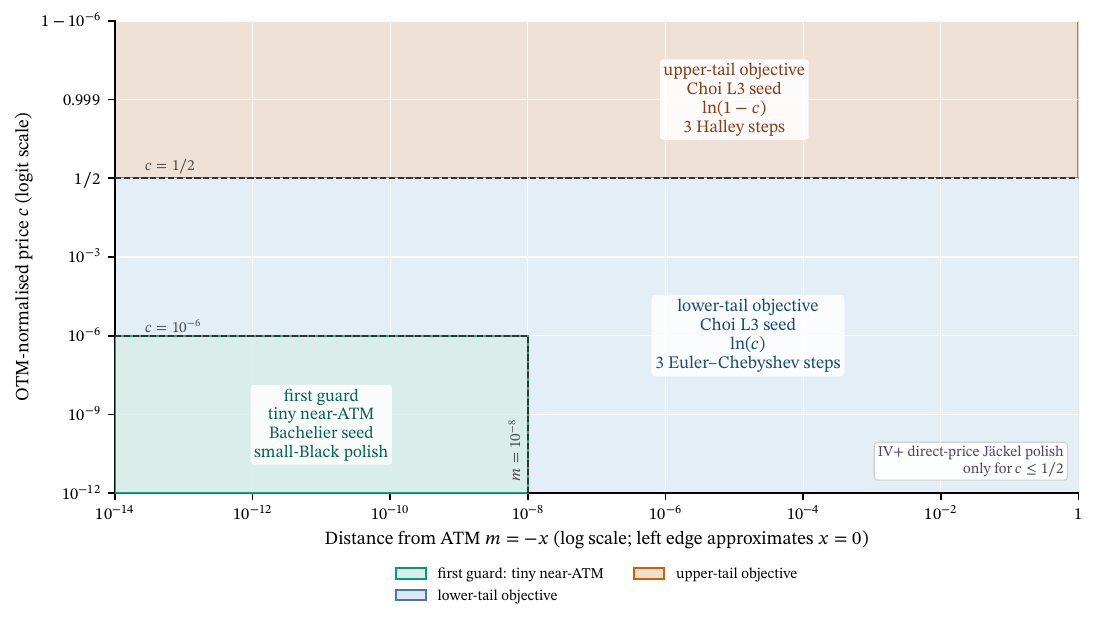}
\caption{ThiopheneIV input-domain branch map after OTM normalisation, shown in
the variables $m=-x$ and $c=C_{\mathrm{OTM}}/F_*$.  The vertical axis is a logit
scale, so the top tick is capped at $1-10^{-6}$ only to keep the approach to
the upper bound $c=1$ finite.  The tiny near-ATM guard is tested first.  For
all remaining inputs the solver forms a Choi L3 seed and repairs it only if it
is invalid or non-finite.  The unpolished route then uses three
Euler--Chebyshev steps on $\ln(c)$ for $c\le1/2$ and three Halley steps on
$\ln(1-c)$ for $c>1/2$.  The optional ThiopheneIV+ price polish is applied only
for $c\le1/2$; for $c>1/2$ the complementary-gap objective is retained without
an expanded J\"ackel price polish.
\label{fig:thiophene-branch-flow}}
\end{figure}

\section{Experimental Setup}
\label{sec:setup}

\subsection{Reference Price Computation}
\label{sec:refprice}

Benchmarking a solver to machine precision requires reference prices accurate
to the last bit.  The textbook Black--Scholes formula
$c = \Phi\left(d_1\right) - e^{-x} \Phi\left(d_2\right)$ suffers from catastrophic
cancellation for deep OTM options, and even the erfcx/log objective used by the
unpolished solver can lose relative bits in nearly-ATM, very-low-volatility
cases because it subtracts two close erfcx values.  Therefore the main accuracy
tables use prices generated from a multiprecision Black evaluator at the known
reference total volatility.  The resulting OTM-normalised prices are rounded to
double precision and stored with the source distribution, so the benchmark does
not regenerate a double-precision pricing oracle on each run.  For each
reference total volatility, the stored price is passed to the solver in the same
$c=C_{\mathrm{OTM}}/F_*$ convention as any other input.  Thus the reported
accuracy is a comparison against rounded multiprecision Black prices, not merely
against the internal objective of the unpolished solver.
ITM prices are recovered via put--call parity (a well-conditioned
addition of intrinsic value), never by direct evaluation.

\subsection{Test Datasets}
\label{sec:datasets}

To ensure robustness, we benchmark on eight datasets spanning the
full range of practically relevant option parameters
(Table~\ref{tab:datasets}).

\begin{table}[!htbp]
\caption{Test datasets.\label{tab:datasets}}
\begin{tabularx}{\textwidth}{lrX}
\toprule
\textbf{Dataset} & \textbf{Cases} & \textbf{Description} \\
\midrule
CLY-3D    & \num{51321} & Cui--Liu--Yao three-dimensional grid:
            $K \in [105,800]$, $T \in [0.01,2]$,
            $\sigma \in [0.01,0.99]$, filtered at price $10^{-20}$ \\
CLY-20    & \num{1600} & Cui--Liu--Yao fixed-volatility surface:
            $\sigma=20\%$, $K \in [105,180]$, $T \in [0.1,2]$ \\
CLY-80    & \num{1600} & Cui--Liu--Yao fixed-volatility surface:
            $\sigma=80\%$, $K \in [105,800]$, $T \in [0.1,2]$ \\
J\"ackel  & \num{5182}  & Wide moneyness: $K/F \in [0.5, 8]$,
            $\sigma$ up to 4.0 \\
Market    & \num{7151}  & Realistic: $K/F \in [0.7, 1.5]$,
            $T$ from 1/252 to 5\,yr \\
Corners   & 278         & Edge cases: low-vol/short-mat,
            high-vol/deep-OTM, near-ATM small-price, and saturated upper-price cases \\
Stress    & \num{1270}  & Extremes: $K/F$ up to $100\times$,
            $T \in [0.001, 10]$ \\
HighVol   & 149         & Fallback stress zone: $|x| \ge 3$,
            $c \in (0.05, 0.95)$, $\sigma$ up to 2.5 \\
\bottomrule
\end{tabularx}
\end{table}

The first three datasets reproduce the comparison grids from
Cui--Liu--Yao~\cite{cui2025}: the main three-dimensional grid and the
two fixed-volatility surface grids used in their numerical comparison
with previous literature.  The HighVol dataset
specifically targets the region $|x| \ge 3$ with large option prices; cases are
filtered to $c < 0.95$ to reflect the practical use of put--call
parity for deep ITM options.
Appendix~\ref{app:benchmark-construction} gives the exact construction
rules used by the benchmark generator.

\subsection{Solvers Under Comparison}
\label{sec:solvers}

The main comparison is between ThiopheneIV and J\"ackel's \emph{Let's Be
Rational}.  The timing rows use the same benchmark protocol:
minimum of 500 sweeps and 3 independent runs.

\begin{itemize}
  \item \textbf{J\"ackel's \emph{Let's Be Rational}~\cite{jaeckel2017}.}
        Region-dependent asymptotic expansions, log-space iteration, and a
      complementary objective.  The comparison uses the normalised API with
        $\beta=c\sqrt{F/K}$ and the original two-iteration default.
  \item \textbf{ThiopheneIV.}  Choi--Huh--Su L3 seed, three full-precision
      lower-tail Euler--Chebyshev steps, upper-tail Halley steps, and production guards.
        We denote the polished configuration by
        \textbf{ThiopheneIV+}: ThiopheneIV with one final expanded J\"ackel price
        polish on $c\le1/2$ only; the upper half remains on the
        complementary-gap objective.
\end{itemize}

\FloatBarrier
\section{Results}
\label{sec:results}

\subsection{Accuracy and Timing}

Table~\ref{tab:accuracy} reports accuracy and latency on the eight benchmark
datasets.  Input prices for the accuracy block are generated from a
multiprecision Black price at the known total volatility
$v_{\rm ref}$, rounded to double precision, and then passed to the solver as OTM-normalised prices.
Errors are measured in ulps of the reference total volatility.  For each test
case, if the reference total volatility is $v_{\rm ref}$ and the solver returns
$\hat v=\hat\sigma\sqrt{T}$, the reported per-case error is
\[
  \frac{|\hat v-v_{\rm ref}|}{\operatorname{nextUp}(v_{\rm ref})-v_{\rm ref}}.
\]
Thus the unit is the local double-precision spacing at the reference volatility, not a
fixed decimal tolerance.  For example, at $v_{\rm ref}=0.20$, one ulp is
$2.7755575615628914\times10^{-17}$; an 8-ulp error corresponds to an absolute
total-volatility error of about $2.22\times10^{-16}$, or a relative error of
about $1.11\times10^{-15}$.  Table~\ref{tab:accuracy} reports the maximum of
this quantity over each dataset.  This makes the accuracy table a comparison
against a multiprecision price-generation oracle rather than against the simpler
erfcx/log formula.

\begin{table}[!htbp]
\caption{Accuracy against rounded multiprecision Black reference prices and latency
by dataset.\label{tab:accuracy}}
\scriptsize
\setlength{\tabcolsep}{4pt}
\begin{tabularx}{\textwidth}{lCCCCCCCC}
\toprule
& \textbf{CLY-3D} & \textbf{CLY-20} & \textbf{CLY-80}
& \textbf{J\"ackel} & \textbf{Market} & \textbf{Corners}
& \textbf{Stress} & \textbf{HighVol} \\
\midrule
\multicolumn{9}{l}{\textbf{Accuracy -- max error (ulp of reference total volatility)}} \\[2pt]
J\"ackel    &  23 &  5 & 4 &  25 &  29 & 240 &  33 & 2 \\
ThiopheneIV & 133 & 62 & 7 &  89 & 177 & 329 & 138 & 2 \\
ThiopheneIV+  &  24 &  5 & 5 &  13 &  29 &  41 &  33 & 2 \\[4pt]
\multicolumn{9}{l}{\textbf{ThiopheneIV+ accuracy -- max absolute total-volatility error}} \\[2pt]
ThiopheneIV+
  & $7.8{\times}10^{-16}$ & $1.7{\times}10^{-16}$
  & $5.6{\times}10^{-16}$ & $6.2{\times}10^{-15}$
  & $8.9{\times}10^{-16}$ & $2.7{\times}10^{-15}$
  & $8.9{\times}10^{-16}$ & $8.9{\times}10^{-16}$ \\[4pt]
\multicolumn{9}{l}{\textbf{Latency (ns/call)}} \\[2pt]
J\"ackel    & 244 & 230 & 221 & 225 & 204 & 263 & 239 & 208 \\
ThiopheneIV & 167 & 164 & 165 & 171 & 170 & 177 & 176 & 166 \\
ThiopheneIV+  & 211 & 209 & 212 & 209 & 215 & 226 & 223 & 201 \\
\bottomrule
\end{tabularx}
\end{table}

Unpolished ThiopheneIV is faster than J\"ackel's solver on all
eight datasets in this run.  The optional Newton polish is still competitive with
J\"ackel's two-iteration reference path, but it is no longer a pure low-latency
configuration; its purpose is reference-price alignment on the lower-price half.
When the comparison target is instead the rounded multiprecision Black price,
individual polished cases can move by a few ulps in either direction once the
three fixed iterations have already reached the last-bit regime.
The upper half is left on the complementary-gap objective, avoiding the direct
price-residual conditioning illustrated in Appendix~\ref{app:worked-examples}.
The Corners ulp count is driven by a near-zero-vega case; the absolute-error row
shows that the ThiopheneIV+ maximum there is $2.7\times10^{-15}$ in total
volatility.

Figure~\ref{fig:jaeckel-comparison-slice} gives fixed-log-moneyness slices
against J\"ackel's normalised solver.  It checks that ThiopheneIV remains in the
same accuracy band as the J\"ackel reference on the diagnostic slices, and shows
how the optional ThiopheneIV+ polish collapses most of the lower-price
reference-price residual.

\begin{figure}[!htbp]
\centering
\includegraphics[width=\textwidth]{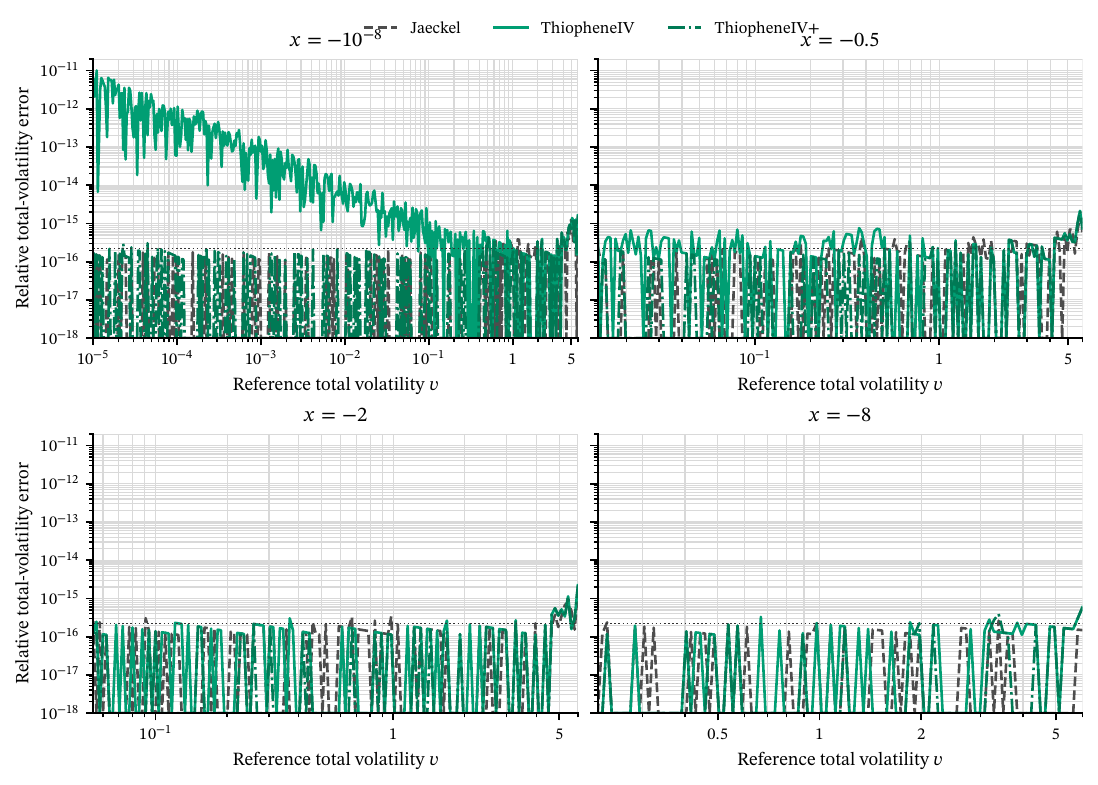}
\caption{Fixed-$x$ comparison of ThiopheneIV and ThiopheneIV+ against J\"ackel's
implementation.  Each panel varies the reference total volatility at fixed
log-moneyness and reports relative total-volatility error.
\label{fig:jaeckel-comparison-slice}}
\end{figure}

The separation between ThiopheneIV and ThiopheneIV+ in the first panel
($x=-10^{-8}$, $v\lesssim10^{-3}$) is the same kind of pricing-formula limit
highlighted by Figure~\ref{fig:pricing-formula-accuracy-diagnostic}, not a sign that
the Choi seed or the three cubic refinement steps have failed to converge.  The
input prices in the figure are generated by the expanded J\"ackel reference price.
Unpolished ThiopheneIV then solves the erfcx/log objective, whose two erfcx
terms are nearly equal in this near-ATM, very-small-total-variance corner.  At
this scale the erfcx/log formula has already lost enough relative bits that its
root is slightly different from the expanded-J\"ackel-reference root.  The
resulting relative price mismatch is typically $10^{-10}$--$10^{-12}$ in this
panel, which appears as about $10^{-12}$ relative volatility error.  J\"ackel and
ThiopheneIV+ both finish by matching the expanded J\"ackel reference price, so
their curves lie near the double-precision floor.

Figure~\ref{fig:thiophene-fixed-x-convergence} resolves the same style of
fixed-$x$ slices by iteration stage.  The Choi L3 seed starts on the lower side
of the root, and the lower-tail Chebyshev / upper-tail Halley steps compress the error to the
double-precision floor across the diagnostic grid.

\begin{figure}[!htbp]
\centering
\includegraphics[width=\textwidth]{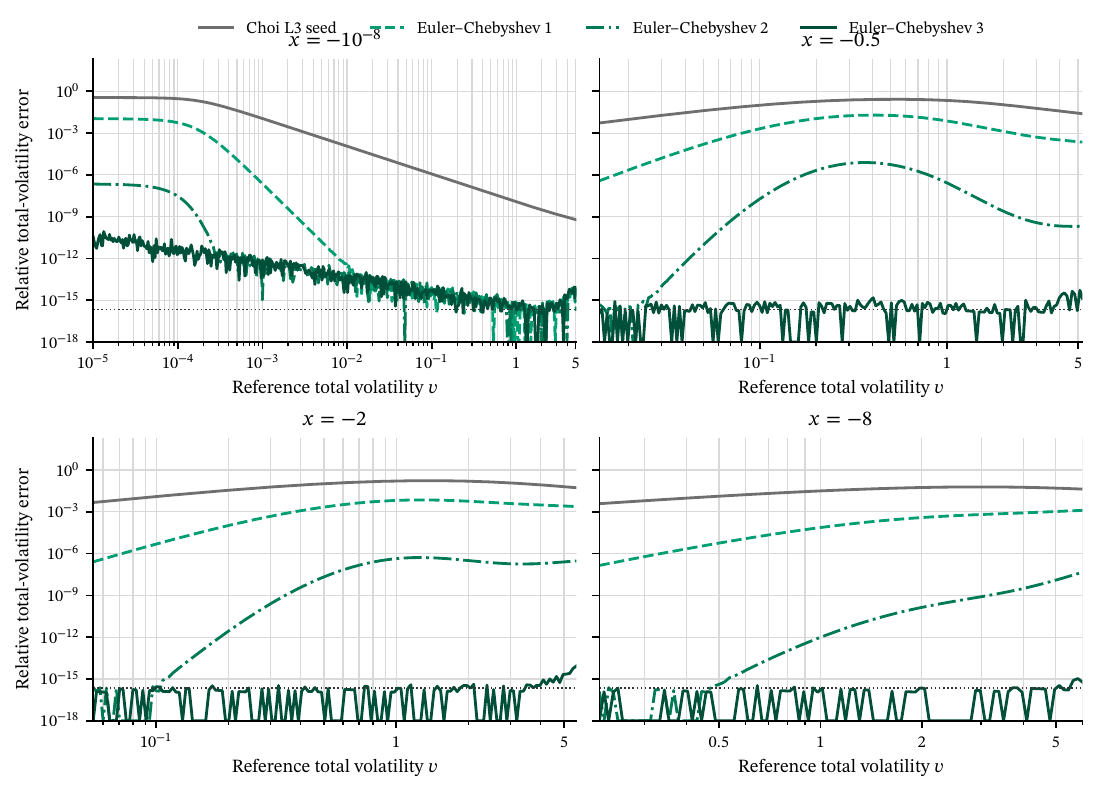}
\caption{Fixed-$x$ convergence slices for the Choi L3 seed and the
ThiopheneIV refinement chain.  Each panel fixes log-moneyness and varies the
reference total volatility.  Curves show relative total-volatility error after
the seed and after one, two, and three full-precision refinement steps;
values are floored at $10^{-18}$ only for log-scale plotting.
\label{fig:thiophene-fixed-x-convergence}}
\end{figure}

\subsection{J\"ackel--Newton Polish Diagnostic}
\label{sec:polish-diagnostic}

The cutoff $c\le1/2$ used by ThiopheneIV+ is not the boundary of J\"ackel's
expanded price-evaluation regions.  Let $c_{\mathrm{J}}$ denote the J\"ackel
expansion threshold derived in Appendix~\ref{app:expanded-black}, namely
\[
  c_{\mathrm{J}} = \erf\left(\sqrt{2}\,\epsilon^{1/16}\right)
  \approx 0.16650723223355586 .
\]
For $c\le c_{\mathrm{J}}$, the polish targets the expanded lower-tail J\"ackel
evaluator and reduces the regular-grid inverse errors to the last few ulps.
The intermediate band $c_{\mathrm{J}}<c\le1/2$ is different: the same
J\"ackel-style evaluator has already fallen back to the Cody erfc/erfcx formula.
Here $x$ is fixed and the sqrt-forward normalised price
$\beta=c e^{x/2}$ from~\eqref{eq:beta-normalise} is proportional to $c$, so
the Newton residual is still formed on the smaller-tail price; the correction
remains well conditioned, but its value depends on the price oracle being
matched.

\begin{table}[!htbp]
\caption{Effect of moving the optional J\"ackel--Newton polish cutoff.  The
vectors in the first two rows are maximum total-volatility errors, in the
dataset order of Table~\ref{tab:accuracy}.\label{tab:polish-threshold}}
\scriptsize
\setlength{\tabcolsep}{3pt}
{\renewcommand{\arraystretch}{1.15}
\begin{tabularx}{\textwidth}{>{\raggedright\arraybackslash}p{0.24\textwidth}>{\raggedright\arraybackslash}p{0.17\textwidth}>{\raggedright\arraybackslash}p{0.32\textwidth}>{\raggedright\arraybackslash}X}
\toprule
\textbf{Experiment} & \textbf{Target price} & \textbf{Numerical result} & \textbf{Takeaway} \\
\midrule
All benchmark cases; no final polish
& Rounded multiprecision
& Max total-volatility error: $133,62,7,89,177,329,138,2$ ulps
& Baseline after the three fixed refinement steps. \\
All benchmark cases; polish restricted to $c\le c_{\mathrm{J}}$
& Rounded multiprecision
& Max total-volatility error: $24,5,5,13,29,41,33,2$ ulps
& Same maxima as the $c\le1/2$ polish; the low-tail expansion region supplies the grid-level gain. \\
$c_{\mathrm{J}}<c\le1/2$ band (6603 cases); add polish in this band
& Rounded multiprecision
& Price residual better/worse/same: 2124/676/3803; largest gain/loss: 8/5 price ulps
& Mixed last-bit effect against the multiprecision oracle. \\
$c_{\mathrm{J}}<c\le1/2$ band (6603 cases); add polish in this band
& J\"ackel formula
& Price residual better/worse/same: 2188/121/4294; max total-volatility error: 9 to 6 ulps
& More favourable when the same J\"ackel formula supplies the target. \\
\bottomrule
\end{tabularx}}
\end{table}

Thus the polish above $c_{\mathrm{J}}$ is mainly a choice to align with
J\"ackel's double-precision pricing formula.  That formula is a high-quality
double oracle, but its rounded price can still differ in the last bit, or by a
few ulps after inversion, from a price generated in multiprecision and rounded
to double.

It is deliberately optional: when invoked it improves ulp-level agreement with
the expanded J\"ackel reference price, but it adds roughly 48--55\,ns per call.
This last correction is useful only when the surrounding system also treats that
reference price as the target.  Ideally, production systems would use the same
high-accuracy J\"ackel-style Black price for pricing, calibration, and inversion.
In practice, the straightforward Black--Scholes formula built from a library
normal CDF is extremely common; in such systems the pricing formula itself is
typically less accurate than the expanded J\"ackel reference price, so the polish
may provide no practical benefit.

Figure~\ref{fig:thiophene-polish-roundtrip-diagnostic} shows locally the same
effect reported in Table~\ref{tab:accuracy}: where the optional polish is active,
it reduces the residual of the expanded J\"ackel reference price, with the
incremental cost visible in the latency block of the table.

\begin{figure}[!htbp]
\centering
\includegraphics[width=\textwidth]{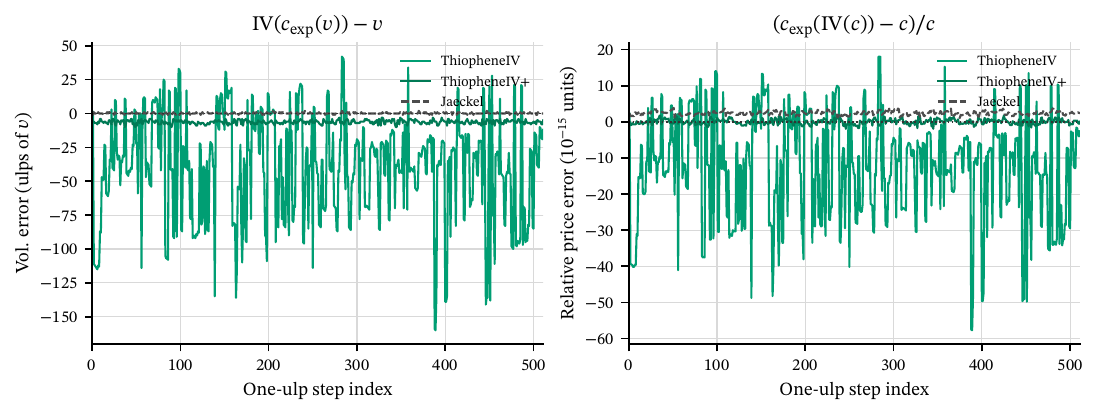}
\caption{Local round-trip diagnostics for the optional J\"ackel--Newton polish.
The left panel measures implied-volatility error when the input price is
generated by the expanded J\"ackel reference price; the right panel measures
the relative price residual after inversion and repricing with the same
reference formula.
\label{fig:thiophene-polish-roundtrip-diagnostic}}
\end{figure}

The small non-zero mean of the ThiopheneIV+ curve in the left panel is a
floating-point effect, not evidence that the polished method remains a lower bound.
After the final Newton polish, the sign is dominated by rounding, objective
evaluation mismatch, and ulp quantisation; vega maps the remaining few-volatility-ulp
bias into sub-ulp price changes, so the repriced residual can look mean-zero.

\subsection{Operation Costs}

\begin{table}[!htbp]
\caption{Micro-benchmarked operation costs relevant to ThiopheneIV in the
implementation.\label{tab:costs}}
\begin{tabularx}{\textwidth}{lC}
\toprule
\textbf{Operation} & \textbf{Cost (ns)} \\
\midrule
Thiophene L3 guess                  & 19.6 \\
1 exact erfcx (Boost/Commons)       & 5.9 \\
Natural logarithm                   & 4.9 \\
Exponential                         & 3.8 \\
Square root                         & 0.7 \\
Direct J\"ackel--Newton polish, when applied & about 48--55 \\
\bottomrule
\end{tabularx}
\end{table}

The Choi L3 seed is compact, but the three exact cubic refinement steps dominate
the hot path through erfcx and logarithm evaluations (Table \ref{tab:costs}).  J\"ackel--Newton polish is significant.

\section{Discussion}
\label{sec:discussion}

The proof-backed part of ThiopheneIV is deliberately narrow and explicit.  In
real arithmetic the Choi L3 seed starts on the lower side of the root, the
lower-tail Euler--Chebyshev map and upper-tail Halley map are increasing and do not
overshoot.  This gives a clean monotone-convergence story that is absent from a
purely empirical H3 chain.

The engineering story is less clean, and that is the important practical point.
A double-precision solver has to decide what to do when an inverse-normal probability is
rounded to the endpoint, when a candidate update is non-finite, when the price
is so close to the no-arbitrage upper bound that volatility is poorly
conditioned, or when a microscopic near-ATM price is better represented by a
Bachelier-limit expansion than by the Black tails themselves.  These branches do
not contradict the convergence theorem; they protect the preconditions under
which the theorem and the floating-point evaluation are useful.

J\"ackel's solver embodies a similar lesson from another direction.  Its
robustness comes from normalisation, asymptotic regions, complementary
objectives, and careful iteration choices.  ThiopheneIV replaces the central
seed-and-refine mechanism with Choi L3 plus monotone cubic refinement maps, but it
still needs comparable boundary awareness.  The optional J\"ackel--Newton polish
is another example: convergence to the erfcx/log objective is already obtained,
yet matching a different high-accuracy pricing objective to a few ulps can
require one more targeted correction on the lower-price half.  In the upper
half, the complementary-gap objective is the better-conditioned target.

The benchmark timings are implementation timings, not universal constants.  A
faster erfcx implementation, a vectorised batch path, or a different inverse
normal CDF would shift the absolute numbers.  The qualitative trade-off should
be more portable: a monotone seed-refine pair buys mathematical transparency,
while production robustness and final-reference matching add guard and polish
costs that must be budgeted explicitly.

\FloatBarrier
\section{Conclusions}
\label{sec:conclusion}

ThiopheneIV combines the Choi--Huh--Su L3 lower-bound seed with three
lower-tail Euler--Chebyshev corrections or, for prices above the midpoint,
three upper-tail Halley corrections on the complementary logarithmic objective.
In exact arithmetic, the Choi seed, lower-tail Euler--Chebyshev map, and
upper-tail Halley map give monotone convergence from below to the admissible
Black--Scholes implied volatility.  In the reported
benchmark, the unpolished implementation has low-double-digit to
few-hundred-ulp maximum errors against the multiprecision reference prices
and is faster than the J\"ackel comparison on all eight datasets.
With the optional J\"ackel--Newton polish on $c\le1/2$, the regular-grid errors are
low ulp counts against the multiprecision reference prices, but the correction
is aimed primarily at agreement with the J\"ackel double-precision price formula.
The polish is therefore best viewed as an optional reference-alignment step
rather than a universal improvement.
Production systems should preferably use the same high-accuracy J\"ackel-style
Black price for all pricing tasks, but many still use the textbook normal-CDF
Black--Scholes formula.  In those stacks, the extra polish can be below the
accuracy floor of the surrounding pricing code.

The main conclusion is not that a convergence proof makes implied-volatility
solving simple.  The proof guarantees the mathematical iteration once its
preconditions are met.  A robust production solver still needs many additional
steps: normalising quotes to equivalent out-of-the-money prices, evaluating
tail prices safely, guarding probabilities and updates, handling microscopic
near-at-the-money prices, treating prices close to the no-arbitrage upper bound,
and sometimes applying a final polish to machine-epsilon agreement with a chosen
reference price.
Writing such a solver is therefore less straightforward than ``start from a
guaranteed seed and iterate'', even when the real-arithmetic convergence is
settled.  The value of ThiopheneIV is that it separates these two issues: a
transparent monotone core, surrounded by the necessary engineering safeguards.
\appendixtitles{yes}
\FloatBarrier
\appendixstart
\appendix
\section[\appendixname~\thesection]{Benchmark Construction and Worked Examples}
\label{app:benchmark-construction}

\subsection{Dataset grids}

\begingroup
\sloppy
The dataset builders use the following grids, with the above filtering
applied after each raw case is generated:
\begin{itemize}
  \item \textbf{CLY-3D.} $S=100$, $r=0.03$,
    $K=\operatorname{linspace}(105,800,40)$,
    $T=\operatorname{linspace}(0.01,2,40)$, and
    $\sigma=\operatorname{linspace}(0.01,0.99,40)$, retaining prices
    above $10^{-20}$ as in the Cui--Liu--Yao comparison script.
  \item \textbf{CLY-20.} $S=100$, $r=0.03$, $\sigma=0.20$,
    $K=\operatorname{linspace}(105,180,40)$, and
    $T=\operatorname{linspace}(0.1,2,40)$.
  \item \textbf{CLY-80.} $S=100$, $r=0.03$, $\sigma=0.80$,
    $K=\operatorname{linspace}(105,800,40)$, and
    $T=\operatorname{linspace}(0.1,2,40)$.
  \item \textbf{J\"ackel.} $S=100$, $r=0$,
    $K=100\,\operatorname{linspace}(0.5,8,30)$,
    $T\in\{0.01,0.1,0.25,0.5,1,2\}$, and
    $\sigma=\operatorname{linspace}(0.02,4,30)$.
  \item \textbf{Market.} $S=100$, $r=0.03$,
    $K=100\,\operatorname{linspace}(0.7,1.5,30)$,
    $T\in\{1/252,5/252,21/252,63/252,0.5,1,2,5\}$, and
    $\sigma=\operatorname{linspace}(0.05,1.5,30)$.
  \item \textbf{Corners.} The union of low-volatility short-maturity
    ITM calls; deep OTM calls with $K\in\{200,300,500,1000,2000\}$;
    near-OTM short-maturity calls with
    $K=\operatorname{linspace}(101,150,10)$; high-volatility cases
    with $K\in\{100,150,200,500\}$; and the near-ATM small-price
    grid $K\in\{100.5,101,102,105,110\}$,
    $T\in\{0.001,0.005,0.01\}$,
    $\sigma\in\{0.005,0.01,0.02,0.05\}$.
  \item \textbf{Stress.} $S=100$, $r=0.03$,
    $K\in\{101,102,103,110,150,200,500,1000,2000,5000,10000\}$
    and $K\in\{10,20,50,80,90,95,98,99\}$,
    $T\in\{0.001,0.005,0.01,0.05,0.1,0.5,1,2,5,10\}$, and
    $\sigma\in\{0.01,0.02,0.05,0.10,0.20,0.30,0.50,0.80,0.99\}$.
  \item \textbf{HighVol.} $S=100$, $r=0$,
    $K\in\{1,2,3,4\}$,
    $T\in\{1,2,3,5,7,10\}$, and
    $\sigma\in\{0.5,0.8,1.0,1.2,1.5,2.0,2.5\}$, retaining only
    cases with finite $\ln\left(c\right)$ and $\ln\left(c\right)\le -0.05$ before the
    common normalisation filter.
\end{itemize}
\endgroup

\subsection{Normalisation and dataset construction}

All generated benchmark cases are converted to the same normalised OTM
representation before being passed to the solvers.  Given spot $S$,
strike $K$, maturity $T$, volatility $\sigma$, and rate $r$, the benchmark
helper computes
\[
  F = S e^{rT}, \qquad F_* = \min(F,K), \qquad
  K_* = \max(F,K), \qquad e^x = F_*/K_*,
\]
and then
\[
  x = \ln\left(e^x\right), \qquad v_{\rm ref}=\sigma\sqrt{T}.
\]
For accuracy reporting, the stored price is the OTM-normalised price obtained
from a multiprecision Black calculation and rounded to double precision.  For
the raw dataset filters, the
corresponding $\ln(c)$ is required to be finite, no larger than $-10^{-15}$
(effectively at the upper price boundary), no smaller than $-708$ (double
underflow), and $c\in(0,1)$.  The benchmark array stores
$(c,e^x,T,\sigma)$; all six generated datasets use calls, with ITM
calls represented through the OTM parity leg.

\subsection{Concrete worked examples}
\label{app:worked-examples}

For example, the near-ATM small-price Corners case
$S=100$, $K=100.5$, $T=0.01$, $\sigma=0.05$, $r=0$ gives
\[
\begin{aligned}
F &= 100, & e^x &= 100/100.5 = 0.9950248756218906,\\
x &= -0.004987541511039051, & v_{\rm ref} &= 0.005000000000000001,\\
\ln\left(c\right) &= -7.776203975967922, & c &= 4.196019744216237\times10^{-4}.
\end{aligned}
\]
It is therefore stored as a valid Corners case and enters the
near-ATM small-price seed branch ($c\le 5\times10^{-4}$ and
$|x|<0.01$), not the microscopic Bachelier branch
($c\le 10^{-6}$ and $|x|\le10^{-8}$).

A complementary saturated-price corner illustrates the conditioning at
high volatility close to ATM.  With
$S=100$, $K=100.01$, $T=10$, $\sigma=3$, and $r=0$, one obtains
\[
\begin{aligned}
e^x &= 0.9999000099990001,
& x &= -9.999500033332494\times10^{-5},\\
v_{\rm ref} &= 9.486832980505138,
& \ln\left(c\right) &= -2.1015432345450336\times10^{-6},\\
c &= 0.9999978984589737,
& 1-c &= 2.1015410263114376\times10^{-6}.
\end{aligned}
\]
This case is near ATM but enters the upper-tail branch because
$c>1/2$.  In this regime, a guarded production solver can still reprice to machine-precision price
accuracy even when the returned volatility differs from the generating value by
more than a few ulps.  The larger error in volatility is unavoidable conditioning:
the normalised vega is only $5.19\times10^{-6}$, so an absolute price
perturbation at double precision is amplified when inverted for~$v$.

A smaller but useful upper-half regression case is obtained from the exact
binary double inputs printed as
\[
  x=-10^{-6}, \qquad c=0.9999, \qquad T=1.
\]
The multiprecision reference total volatility for those rounded inputs is
\[
  v_*=7.7811840154613839563\ldots .
\]
ThiopheneIV returns
\[
  7.781184015461386,
\]
and ThiopheneIV+ returns the same value because the direct J\"ackel price polish
is inactive for $c>1/2$.  This is about $1.9$ ulps of the reference total
volatility.  If the expanded J\"ackel direct-price Newton correction is applied
nevertheless, it returns
\[
  7.781184015461925,
\]
about $609$ ulps from the same reference.  The direct price is accurate in
absolute terms, but above the midpoint the small quantity is the complementary
gap $1-c$; subtracting against the no-arbitrage upper bound makes a direct
price residual a poor Newton target.

\section{Guards and special branches}\label{sec:guards_and_branches}
\subsection{Microscopic Bachelier-limit branch}
\label{app:microscopic-bachelier}

The microscopic branch is deliberately narrower than the ordinary
near-ATM small-price seed.  It is entered only for normalised OTM prices
$c\le 10^{-6}$, $|x|\le 10^{-8}$, and $x\le0$ (with a one-part-in
$10^{12}$ tolerance on the price cut-off).  In this box the two Black
normal terms are almost equal and the volatility is of the same order as
the log-moneyness.  Iterating directly on the Black formula can therefore
spend its effort resolving a difference of nearly equal tail quantities,
while a central ATM approximation can be wrong by orders of magnitude as
soon as $|x|/v$ is no longer small.

The branch first removes the first-order lognormal skew by using the
sqrt-forward variables
\[
  \beta = c e^{x/2}, \qquad m=-x\ge0, \qquad a=m/v .
\]
The leading price is the Bachelier integral
\[
  I_0(m,v) = v\phi\left(a\right)-m\Phi\left(-a\right),
\]
and the local Black correction is evaluated through the expansion
\[
  \beta_{\mathrm{Bl}}(m,v)
  \approx I_0 - \frac{I_2}{8} + \frac{I_4}{128},
\]
where
\[
  I_2=\frac{v^3\phi\left(a\right)-m^2I_0}{3}, \qquad
  I_4=\frac{v^5\phi\left(a\right)-m^2I_2}{5}.
\]
The Newton denominator uses the exact sqrt-forward Black vega identity
\[
  \frac{\partial \beta}{\partial v}
  = \phi\left(a\right)e^{-v^2/8},
\]
so the correction remains well-scaled even when the price itself is
microscopic.

There are three exits.  If $m>0$ and $\ln\left(m/\beta\right)>20$, a normal
deep-tail equation is solved for $a=m/v$:
\[
  \ln\left(\frac{\beta}{m}\right)
  = -\frac{a^2}{2}-\frac{1}{2}\ln\left(2\pi\right)
    + \ln\left(\frac{1}{a}-\sqrt{\frac{\pi}{2}}\,
      \erfcx\left(\frac{a}{\sqrt{2}}\right)\right).
\]
The scaled Mills-ratio form avoids subtracting two underflow-level normal
tails.  This candidate is accepted only when the resulting ratio $m/v$ is
larger than $4$, where the omitted Black correction is numerically
negligible.  Otherwise, the rational Bachelier approximation of
Le~Floc'h~\cite{lefloc2016basispoint} supplies the normal-model seed,
and two Newton corrections are applied to the expansion above.  The only
remaining terminal path is defensive: if neither guarded path returns a
finite positive value, the branch returns the zero-volatility limit.  For
finite admissible inputs this path is not expected, since the rational seed
covers the ATM and transition regimes while the deep-tail solve intercepts
ratios below the rational approximation's finite domain.

A deep-tail example is obtained directly in normalised coordinates by taking
\[
  x=-10^{-14}, \qquad e^x=0.99999999999999, \qquad
  T=1, \qquad \ln\left(c\right)=-100.
\]
Equivalently, one may take $K_*=1$ and
$F_*=e^x=0.99999999999999$, giving
$C_{\mathrm{OTM}}=F_*c=3.7200759760207988\times10^{-44}$.
Thus $c=e^{-100}=3.720075976020836\times10^{-44}$ and
\[
  m=-x=10^{-14}.
\]
The Bachelier-limit branch is active because
$c\le10^{-6}$, $|x|\le10^{-8}$, and $x\le0$.  The deep-tail
sub-branch is selected first (before the rational Bachelier seed) because
\[
  \ln\left(m\right) - \ln\left(c\right) = 67.76380869808337 > 20.
\]
Solving the scaled Mills-ratio equation gives
\[
  \hat v = 9.155604419747184\times10^{-16}, \qquad
  m/\hat v = 10.922271803739784 > 4.
\]
A 120-digit check of the Black formula gives the consistent root
$v_{\rm ref}=9.155604419747139\times10^{-16}$ for
$\ln\left(c\right)=-100$.  The relative difference between the branch value and
this high-precision Black root is about $5.0\times10^{-15}$.  By
contrast, $v=1.25\times10^{-15}$ gives
$\ln\left(c\right)=-71.43799710528910$, so it is not the reference volatility for
the $\ln\left(c\right)=-100$ input.  In ordinary double precision the direct erfcx
Black log-price difference collapses for this point, which is why the
solver uses the Bachelier-limit tail equation instead of entering the
Black H3 path.

A second example shows why the rational Bachelier seed is still needed
outside the accepted deep tail.  Take
\[
  x=-10^{-8}, \qquad e^x=0.9999999900000001, \qquad
  T=1, \qquad c=10^{-16}.
\]
Then
\[
  m=10^{-8}, \qquad
  \beta=ce^{x/2}=9.999999950000000\times10^{-17}, \qquad
  \ln\left(m/\beta\right)=18.420680748952365.
\]
The price and moneyness are microscopic, but the conservative deep-tail
test is not satisfied because $\ln\left(m/\beta\right)<20$.  This is the transition
region: the central estimate $\sqrt{2\pi}\beta$ is too small by almost a
factor of eight, while the asymptotic tail equation is not used.  The
rational Bachelier seed gives the normal root
\[
  v = 1.9952018436169516\times10^{-9}, \qquad m/v=5.012024238044884,
\]
and the local Black correction leaves the value unchanged at the shown
precision.  A high-precision evaluation of
$\Phi\left(x/v+v/2\right)-e^{-x}\Phi\left(x/v-v/2\right)$ at this value gives
$1.0000000000000004\times10^{-16}$, i.e., the target price to double
precision.  The example is therefore neither an ATM case nor a safely
asymptotic tail case; it is exactly the narrow zone for which the
Bachelier rational seed keeps the expansion polish in its convergence
basin.

\subsection{Choi seed repair}
\label{app:floating-point-branches}

The Choi L3 formula is the intended non-Bachelier entry point for ThiopheneIV.
The seed-repair fallback in the implementation is used only if the Choi seed is
non-finite or non-positive after the floating-point probability calculation. It is not reached in the extensive benchmark presented in this paper. It is nevertheless part of ThiopheneIV as a production solver.  A
concrete adversarial example is $x\simeq -720$,
$e^x\simeq 2.03\times10^{-313}$, and $c=\operatorname{nextUp}(0)$: the L3
probability calculation becomes non-finite in double precision, and the repair
path supplies a finite positive seed before the cubic refinement begins.

The repair hierarchy is deliberately conservative.  For a very small near-ATM
price ($c<10^{-4}$ and $|x|<0.01$), it uses
\[
  v_0=\sqrt{x^2+2\pi c^2},
\]
the first-order ATM relation with a moneyness correction.  Otherwise it tries
the guarded OTM asymptotic seed
\[
  D=\sqrt{\max\{-2\ln(c)-\ln(2\pi),0\}},\qquad
  v_0=\frac{-2x}{D+\sqrt{D^2-2x}},
\]
accepted only when the discriminant and denominator are finite and positive.
If even this defensive path fails, the terminal fallback is
$v_0=\sqrt{2|x|}$, followed by the positive floor $10^{-10}$.  These branches
are outside the real-arithmetic convergence proof; their role is only to ensure
that rounded or adversarial inputs still reach a finite positive starting point.

\section[\appendixname~\thesection]{Expanded Normalised-Black Evaluation}
\label{app:expanded-black}

The cancellation-avoiding computation used for the expanded J\"ackel reference
price in Table~\ref{tab:accuracy} and
Figures~\ref{fig:pricing-formula-accuracy-diagnostic},
\ref{fig:pricing-formula-monotonicity-diagnostic}, and
\ref{fig:thiophene-polish-roundtrip-diagnostic}
is a price evaluator, not an additional implied-volatility iteration.  It works
in J\"ackel's sqrt-forward normalisation
\[
  \beta(x,s)=c(x,s)e^{x/2},
\]
and the OTM-forward-normalised price used elsewhere in this paper is recovered
afterwards as $c=\beta e^{-x/2}$.  For $x\le0$ and total volatility $s$, define
\[
  h=\frac{x}{s}, \qquad t=\frac{s}{2}, \qquad
  \nu(x,s)=\frac{1}{\sqrt{2\pi}}\exp\left[-\frac{1}{2}\left(h^2+t^2\right)\right].
\]
The exact beta-normalised Black price is
\[
  \beta(x,s)=e^{x/2}\Phi(h+t)-e^{-x/2}\Phi(h-t)
             =e^{ht}\Phi(h+t)-e^{-ht}\Phi(h-t),
\]
because $x=2ht$.  Its derivative with respect to total volatility is exactly
\[\partial\beta/\partial s=\nu(x,s).\]
J\"ackel's implementation therefore often evaluates the scaled price
$\beta/\nu$ in the difficult regions and multiplies by $\nu$ only at the end.
This is the source of the phrase ``expanded normalised Black'' in this paper.

The branch dispatcher is applied in the $(x,s)$ plane before any price is
formed.  With J\"ackel's constants $\eta=-13$ and
$\tau=2\epsilon^{1/16}$, where $\epsilon$ is double-precision machine epsilon,
Region~I is defined by
\[
  x<\eta s, \qquad
  s\left(\frac{s}{2}-\left(\tau+\frac{1}{2}+\eta\right)\right)+x<0,
\]
while Region~II is defined by
\[
  s(s-2\tau)-\frac{x}{\eta}<0 .
\]
The actual price evaluator is then
\[
  \beta_{\mathrm{J}}(x,s)=
  \begin{cases}
    \nu(x,s)\,R_{\mathrm{I}}(h,t), & \text{Region~I},\\[2pt]
    \nu(x,s)\,R_{\mathrm{II}}(h,t), & \text{Region~II},\\[2pt]
    \beta_{\mathrm{Cody}}(x,s), & \text{otherwise}.
  \end{cases}
\]
Region~I is tested first.  It is the deep-tail branch; it evaluates a rational
asymptotic expansion for the scaled price.  In the notation above it has the
form
\[
  R_{\mathrm{I}}(h,t)=\frac{t}{(h+t)(h-t)}\,\Omega(q,e),
  \qquad e=\left(\frac{t}{h}\right)^2,
  \qquad q=\left(\frac{h}{(h+t)(h-t)}\right)^2,
\]
where $\Omega$ is J\"ackel's Horner-evaluated polynomial in $q$, with
coefficients depending on $e$ and with the number of retained terms chosen by
the tail bound.  Region~II is the small-$t$ branch.  It is not a separate
model: it is the odd Taylor expansion of the same exact beta-normalised Black
price after the vega factor has been divided out,
\[
  R_{\mathrm{II}}(h,t)=t
  \sum_{j=0}^{6} b_j(h)t^{2j}.
\]
Let
\[
  a(h)=1+h\sqrt{\frac{\pi}{2}}\,
  \erfcx\left(-\frac{h}{\sqrt{2}}\right).
\]
Equivalently, $a(h)=1+h\Phi(h)/\phi(h)$, so the leading coefficient satisfies
$\partial\beta/\partial t\vert_{t=0}=2\nu(h,0)a(h)$.  This is why $a(h)$
appears in every coefficient.  For large negative $h$, the expression
$1+h\Phi(h)/\phi(h)$ is small and would be cancellation-prone if evaluated
literally; the implementation uses rational approximations to $a(h)$ in that
tail.
The coefficients are
\[
\begin{aligned}
b_0 &= 2a,\\
b_1 &= \frac{-1+a(3+h^2)}{3},\\
b_2 &= \frac{-7-h^2+a(15+10h^2+h^4)}{60},\\
b_3 &= \frac{-57-18h^2-h^4+a(105+105h^2+21h^4+h^6)}{2520},\\
b_4 &= \frac{1}{181440}\Bigl[-561-285h^2-33h^4-h^6\\
&\quad +a(945+1260h^2+378h^4+36h^6+h^8)\Bigr],\\
b_5 &= \frac{1}{19958400}\Bigl[-6555-4680h^2-840h^4-52h^6-h^8\\
&\quad +a(10395+17325h^2+6930h^4+990h^6+55h^8+h^{10})\Bigr],\\
b_6 &= \frac{1}{3113510400}\Bigl[-89055-82845h^2-20370h^4-1926h^6-75h^8-h^{10}\\
&\quad +a(135135+270270h^2+135135h^4+25740h^6+2145h^8+78h^{10}+h^{12})\Bigr].
\end{aligned}
\]

The fallback branch evaluates the same exact price with a Cody-style choice of
$\erfc$ or $\erfcx$.  Let
\[
  q_1=-\frac{h+t}{\sqrt{2}}, \qquad
  q_2=-\frac{h-t}{\sqrt{2}}, \qquad \rho=0.46875 .
\]
Then $2\beta_{\mathrm{Cody}}$ is computed as
\[
\begin{cases}
e^{x/2}\erfc(q_1)-e^{-x/2}\erfc(q_2),
  & q_1<\rho,\ q_2<\rho,\\[2pt]
e^{x/2}\erfc(q_1)-e^{-(h^2+t^2)/2}\erfcx(q_2),
  & q_1<\rho\le q_2,\\[2pt]
e^{-(h^2+t^2)/2}\erfcx(q_1)-e^{-x/2}\erfc(q_2),
  & q_2<\rho\le q_1,\\[2pt]
e^{-(h^2+t^2)/2}\left[\erfcx(q_1)-\erfcx(q_2)\right],
  & \rho\le q_1,\ \rho\le q_2 .
\end{cases}
\]
Each row is algebraically the same erfc representation of the Black price; the
branching only chooses where to factor out the Gaussian exponential so that the
two terms remain on a comparable numerical scale.

The optional ThiopheneIV+ polish uses this evaluator in the beta scale.  Given a
target $\beta_*=c_{\mathrm{target}}e^{x/2}$ and a current total volatility
$s$, one J\"ackel--Newton correction is
\[
  s_{\mathrm{new}}=s+\frac{\beta_* - \beta_{\mathrm{J}}(x,s)}{\nu(x,s)}.
\]
The correction is applied only on the lower-price half in the main solver; the
upper half is kept on the complementary $\ln(1-c)$ objective.

It is useful to separate these price-evaluation regions from the solver's
lower/upper objective split.
For fixed total volatility, the normalised OTM price increases with $x$ up to
the ATM boundary $x=0$.  Hence the largest possible price in Region~II is
approached at $x\to0^-$ and $s\to2\tau$, giving
\[
  \sup_{\mathrm{II}} c
  = 2\Phi(\tau)-1
  = \erf\left(\frac{\tau}{\sqrt{2}}\right)
  \approx 0.16650723223355586 .
\]
In Region~I, the price-maximising boundary gives
$d_1=x/s+s/2\le \eta+1/2+\tau$, and therefore
\[
  \sup_{\mathrm{I}} c
  = \Phi\left(\eta+\frac{1}{2}+\tau\right)
  = 5.1395297547074\times10^{-35} .
\]
Thus J\"ackel's explicit Region~I/II expansions are strictly lower-price
machinery; they never apply for $c\ge1/2$.  This is why the optional
ThiopheneIV+ direct-price polish is limited to the lower half and why the
upper half instead keeps the complementary $\ln(1-c)$ objective.

The expanded-J\"ackel price-evaluation regions used by the optional polish are
shown separately in Figure~\ref{fig:jaeckel-price-zones}.
\begin{figure}[!htbp]
\centering
\includegraphics[width=\textwidth]{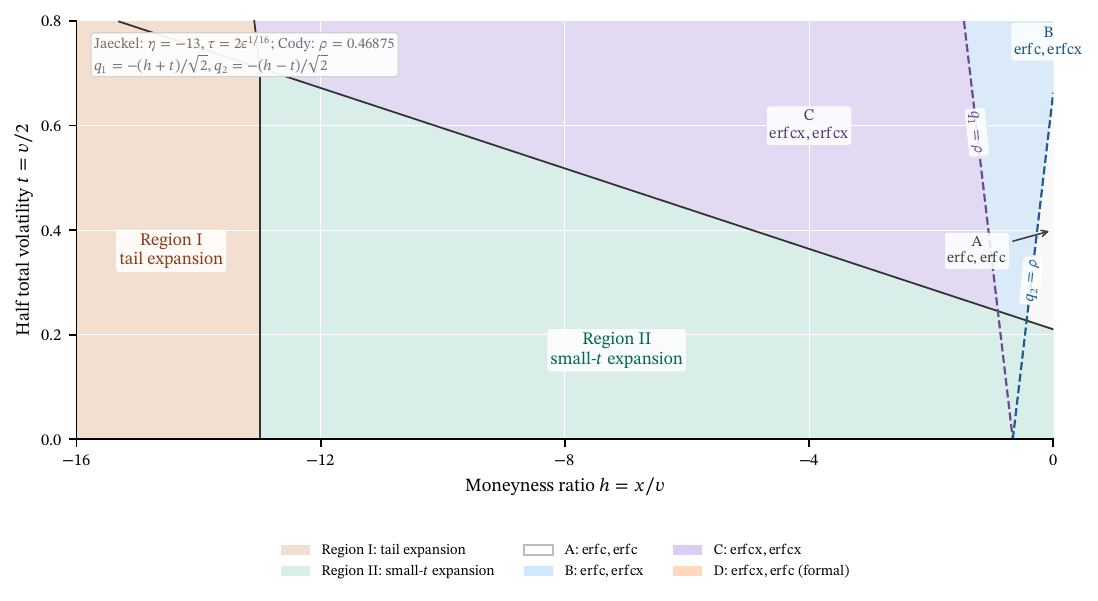}
\caption{Expanded-J\"ackel price-evaluation dispatch in the variables
$h=x/v$ and $t=v/2$.  Region~I uses the tail expansion, Region~II uses the
small-$t$ expansion, and the remaining domain uses the Cody erfc/erfcx path.
The ThiopheneIV+ polish uses this evaluator as its price target; the
ThiopheneIV core chooses its lower- or upper-tail objective independently of
these price-evaluation regions.
\label{fig:jaeckel-price-zones}}
\end{figure}

\section{Erfcx Implementation Impact}\label{sec:erfcx-impact}

\begin{table}[!htbp]
\caption{Accuracy against rounded multiprecision Black reference prices 
by dataset, using different implementations of \texttt{erfc} and \texttt{erfcx}.\label{tab:accuracy_erfcx}}
\scriptsize
\setlength{\tabcolsep}{4pt}
\begin{tabularx}{\textwidth}{lCCCCCCCC}
\toprule
& \textbf{CLY-3D} & \textbf{CLY-20} & \textbf{CLY-80}
& \textbf{J\"ackel} & \textbf{Market} & \textbf{Corners}
& \textbf{Stress} & \textbf{HighVol} \\
\midrule
\multicolumn{9}{l}{\textbf{ThiopheneIV accuracy -- max error (ulp of reference total volatility)}} \\[2pt]
Cody     &      190   &   52  &     8  &    142 &    520  &    557  &   285  &      2\\
Commons  &      133   &   49  &     8  &     63 &    177  &    329  &   138  &      2\\
Johnson  &      400   &   71  &    16  &     97 &   1210  &    537  &   484  &      3\\[4pt]
\multicolumn{9}{l}{\textbf{ThiopheneIV+ accuracy -- max error (ulp of reference total volatility)}} \\[2pt]
Cody       &     23   &    5  &     4   &    14  &    30  &     41   &   33   &     2\\
Commons    &     23   &    5  &     4   &    13  &    29  &     41   &   33   &     2\\
Johnson    &     22   &    5  &     7   &    13  &    29  &     41   &   32   &     3 \\
\bottomrule
\end{tabularx}
\end{table}
The polished ThiopheneIV+ path is robust across erfcx choices. The unpolished fast solver is where Apache Commons implementation is clearly better on our reference benchmarks (Table \ref{tab:accuracy_erfcx}).

\section[\appendixname~\thesection]{Choi--Huh--Su L3 Seed in ThiopheneIV}
\label{app:choi-l3}

ThiopheneIV uses the Choi--Huh--Su L3 lower-bound formula as an
initial value for the Euler--Chebyshev refinement.  This appendix records the formula
in the normalised variables used throughout.  The solver receives
$x=\ln\left(F_*/K_*\right)\le0$, $e^x=F_*/K_*$, and
$c=C_{\mathrm{OTM}}/F_*$.  Define
\begin{linenomath}
\begin{equation}\label{eq:choi-vars}
  k=-x\ge0, \qquad E=e^k=\frac{1}{e^x}=\frac{K_*}{F_*}.
\end{equation}
\end{linenomath}
For $k>0$, the L3 seed first maps the observed price to
\begin{linenomath}
\begin{equation}\label{eq:choi-probability}
  p_3 = \frac{c(c+E)}{2c+E-1}, \qquad z_3=\Phi^{-1}\left(p_3\right).
\end{equation}
\end{linenomath}
The denominator is positive on the admissible OTM price interval.  In
floating-point arithmetic the implementation still rejects non-positive
or non-finite denominators and clamps $p_3$ to the open interval
representable by doubles before evaluating the inverse normal CDF.

The final step is purely algebraic.  Since
\begin{linenomath}
\begin{equation}\label{eq:choi-d1}
  d_1(v)=\frac{x}{v}+\frac{v}{2}=-\frac{k}{v}+\frac{v}{2},
\end{equation}
\end{linenomath}
the seed solves $d_1(v)=z_3$, equivalently
$v^2/2-z_3v-k=0$.  The positive solution is evaluated as
\begin{linenomath}
\begin{equation}\label{eq:choi-root}
  v_{\mathrm{L3}} =
  \begin{cases}
    z_3+\sqrt{z_3^2+2k}, & z_3\ge0,\\[3pt]
    \dfrac{2k}{\sqrt{z_3^2+2k}-z_3}, & z_3<0,
  \end{cases}
\end{equation}
\end{linenomath}
where the second branch is the rationalised form of the same root and
avoids cancellation when $z_3$ is negative.

At the ATM limit $k=0$, Equation~\eqref{eq:choi-probability} reduces to
$p_3=(1+c)/2$ and the Black price is
$c=2\Phi\left(v/2\right)-1$.  Thus we use
\begin{linenomath}
\begin{equation}\label{eq:choi-atm}
  v_{\mathrm{ATM}}=2\Phi^{-1}\left(\frac{1+c}{2}\right),
\end{equation}
\end{linenomath}
with the small-price series
\begin{linenomath}
\begin{equation}\label{eq:choi-atm-series}
  v_{\mathrm{ATM}} \approx \sqrt{2\pi}\,c
  \left(1+\frac{\pi c^2}{12}\right), \qquad c<10^{-4},
\end{equation}
\end{linenomath}
to avoid spending an inverse-normal evaluation on an almost linear
case.  If any of these steps produces a non-finite or non-positive seed,
ThiopheneIV falls back to the same asymptotic safeguard used by the
standalone reference implementation.  For prices above the midpoint
($c>1/2$), the same seed is refined with the complementary
$\ln\left(1-c\right)$ objective.

\section[\appendixname~\thesection]{Monotone Convergence of Chebyshev's Method from the Choi Seed}
\label{app:chebyshev-convergence}

This appendix provides a self-contained proof that the Euler--Chebyshev method
converges monotonically from
below to the root of the log-price objective when started from the
Choi--Huh--Su L3 lower-bound seed, for all admissible OTM prices in
real arithmetic.  Throughout, we use the notation of
Section~\ref{sec:log-price-objective}.

\subsection*{Setup}

Let $x\le0$, $k=-x\ge0$, and let the log-price objective be
\begin{equation}
  g(v) = \ln b(x,v) - \ln c_{\mathrm{target}},
\end{equation}
where $b(x,v)$ is the normalised Black call price and
$c_{\mathrm{target}}\in(0,1)$ is fixed.  The unique positive root
$v_*$ satisfies $g(v_*)=0$.  The log-vega is $y(v)=g'(v)=f(v)/b(x,v)$,
where $f(v)=\phi(d_1(v))>0$ is the Black vega and $d_1(v)=x/v+v/2$.

For $v<v_*$ we have $g(v)<0$.  Define the Newton displacement
$\eta(v)=-g(v)/g'(v)>0$, and the normalised curvature
$q(v)=g''(v)/g'(v)$.  The \emph{Euler--Chebyshev step} is
\begin{equation}\label{eq:cheb-step}
  T(v) = v + \eta(v)\bigl(1 - \tfrac{1}{2}\,q(v)\,\eta(v)\bigr).
\end{equation}
Equivalently, with the standard Chebyshev--Halley curvature parameter
$\lambda(v)=g(v)g''(v)/[g'(v)]^2=-q(v)\eta(v)$, this is
$T(v)=v+\eta(v)(1+\lambda(v)/2)$.  The proof below keeps $q(v)$ because its
sign is used directly.

Let $v_0$ be the Choi--Huh--Su L3 seed
(Appendix~\ref{app:choi-l3}).  Choi et~al.~\cite{choi2023} prove the
lower-bound property: $0<v_0\le v_*$ for all admissible $(x,c)$.

\begin{Theorem}\label{thm:chebyshev-monotone}
For all $x\le0$, $v\in(0,v_*)$, and admissible
$c_{\mathrm{target}}\in(0,1)$, the Chebyshev displacement satisfies
\begin{equation}\label{eq:chebyshev-bound}
  0 < T(v) - v \le v_* - v.
\end{equation}
Consequently, the iterates $v_{n+1}=T(v_n)$ starting from $v_0$ are
monotone increasing, bounded above by $v_*$, and converge to $v_*$.
\end{Theorem}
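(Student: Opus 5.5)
The plan is to read the Euler--Chebyshev step as a second-order Taylor step for the \emph{inverse} function, so that both inequalities in \eqref{eq:chebyshev-bound} reduce to one sign condition. Since $g'(v)=y(v)>0$, $g$ is a strictly increasing smooth bijection from $(0,\infty)$ onto its range. Let $V$ be its inverse, so $v=V(g(v))$ and $v_*=V(0)$. Differentiating the identity $V(g(v))=v$ gives
\[
V'=\frac{1}{g'},\qquad V''=-\frac{g''}{(g')^3},\qquad V'''=\frac{3(g'')^2-g'g'''}{(g')^5}.
\]
With $\eta=-g/g'$ and $q=g''/g'$, a direct substitution shows
\[
T(v)=V(g)+V'(g)\,(0-g)+\tfrac12V''(g)\,(0-g)^2 ,
\]
which is exactly the second-order Taylor polynomial of $V$ about $g=g(v)$, evaluated at $0$.

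\textbf{Key lemma.} The central claim is $V'''\ge 0$ on the whole range, i.e.\ $3(g'')^2\ge g'g'''$ for every $v>0$ and every $x\le0$. It must hold for all $v$, because the target is arbitrary. Granting the lemma, the two inequalities follow.

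\emph{Upper bound.} Taylor's theorem with Lagrange remainder gives
\[
v_*-T(v)=\tfrac16 V'''(\xi)\,(-g)^3\ge0,
\]
because $-g(v)>0$ for $v<v_*$.

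\emph{Lower bound.} Write $T(v)-v=(-g)\bigl[V'+\tfrac12(-g)V''\bigr]$.
\begin{itemize}
\item If $V''(g)\ge0$, the bracket exceeds $V'>0$.
\item If $V''(g)<0$, use that $V'$ is convex (this is the lemma) and positive. Convexity gives $V'(0)\ge V'(g)+(-g)V''(g)$. Hence $(-g)V''\ge V'(0)-V'(g)>-V'(g)$, and the bracket exceeds $V'/2>0$.
\end{itemize}
This proves $0<T(v)-v\le v_*-v$.

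\textbf{Proving the lemma.} This is the main obstacle. I would divide by $(g')^2$ and use Proposition~\ref{prop:derivatives}. Put $B=(h^2-t^2)/v$, so that $q=B-y$ by \eqref{eq:d2}, and let $A=\bigl(-3h^2-t^2+(h^2-t^2)^2\bigr)/v^2$, so that $g'''/g'=A-3yq-y^2$ by \eqref{eq:d3}. The condition $3q^2\ge g'''/g'$ then becomes
\[
P(y):=y^2-3By+3B^2-A\ge0 .
\]
The constant term is
\[
3B^2-A=\frac{2(h^2-t^2)^2+3h^2+t^2}{v^2}>0 .
\]
The discriminant is
\[
4A-3B^2=\frac{(h^2-t^2)^2-12h^2-4t^2}{v^2}.
\]
When the discriminant is non-positive, $P\ge0$ automatically. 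Otherwise $y=g_0'$ must lie outside the two roots $\tfrac12\bigl(3B\pm\sqrt{4A-3B^2}\bigr)$.

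Here I would use the explicit log-vega from \eqref{eq:logvega}, $y=\sqrt{2/\pi}/(N^+-N^-)$, together with sharp Mills-ratio (Birnbaum/Sampford-type) bounds on $\erfcx$. The aim is to show the following:
\begin{itemize}
\item In the deep-OTM regime $|h|\gg t$, $y$ exceeds the larger root; asymptotically $y\approx -d_1d_2/v+\dots$.
\item In the regime $t\gg|h|$, $y$ lies below the smaller root.
\item The sign of $B$, i.e.\ $|h|\gtrless t$, decides which case applies.
\end{itemize}
If the direct bounds are too loose near $|h|\approx t$, I would fall back on the observation that the discriminant is negative there. Indeed $(h^2-t^2)^2<12h^2+4t^2$ when $|h|$ is close to $t$, so only the genuinely separated regimes need the Mills-ratio estimates.

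\textbf{Convergence.} Starting from the Choi seed, $0<v_0\le v_*$. If $v_0=v_*$ the sequence is constant. Otherwise \eqref{eq:chebyshev-bound} shows by induction that $v_n$ is increasing and bounded by $v_*$, so it has a limit $\bar v\le v_*$. Since $T$ is continuous, $T(\bar v)=\bar v$. But $T(v)>v$ on $(0,v_*)$, so $\bar v=v_*$.
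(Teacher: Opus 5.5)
Your architecture matches the paper's. The step is the second-order Taylor polynomial of the inverse, and the upper bound follows from $3(g'')^2\ge g'g'''$ via the Lagrange remainder. Your $P(y)$ is exactly the paper's $y^2-3ay+2a^2+\beta$: your $B$ is $a=k^2/v^3-v/4$, with $\beta=3k^2/v^4+\tfrac14$ and $3B^2-A=2a^2+\beta$. Your non-positive-discriminant case is the paper's Case~1, $\beta\ge a^2/4$.

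\textbf{The lower bound has a sign error.} The tangent inequality $V'(0)\ge V'(g)+(-g)V''(g)$ gives $(-g)V''(g)\le V'(0)-V'(g)$, which is an \emph{upper} bound; your ``hence'' reverses it. Positivity and convexity of $V'$ cannot give the lower bound you need. For example, $V'(u)=e^{-\alpha u}$ is positive and convex, yet $V'+\tfrac12(-g)V''=e^{-\alpha g}(1-\tfrac12\alpha(-g))<0$ once $\alpha(-g)>2$. The missing ingredient is $V''\ge0$, i.e.\ $g''\le0$ (log-concavity of the price in $v$). With it, $T(v)-v=\eta(1-\tfrac12q\eta)\ge\eta>0$ is immediate. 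The paper proves this directly in Lemma~\ref{lem:logconcave}: with $f=\phi(d_1)$ and $b$ the price, $H=f-ab$ satisfies $H'=\beta b>0$ and $H(0+)\ge0$, so $y\ge a$ and $q=a-y\le0$. Within your framework you could instead note that $V'=1/y\to0$ as $g\to-\infty$, because $y\to\infty$ as $v\downarrow0$. A convex function on $(-\infty,-\ln c_{\mathrm{target}})$ with that limit cannot have $V''<0$ anywhere. But this must be stated, and it makes your lower bound depend on the key lemma holding globally.

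\textbf{The key lemma is unproved, and the configuration you aim for is backwards in both regimes.}
\begin{itemize}
\item If $B=a\le0$ there is nothing to prove. Since $y>0$, $-3By\ge0$ and $3B^2-A>0$, we get $P(y)>0$. Any real roots are negative, so $y$ can never lie below the smaller one.
\item If $a>0$ and $a^2>4\beta$, then $y$ lies \emph{below the smaller root}, not above the larger. As $v\downarrow0$ with $k>0$, $y=a+3/v+O(v)$, whereas the larger root is about $2a$. (The leading term is $d_1d_2/v=a$, not $-d_1d_2/v$.) For $k=1$, $v=0.1$ one finds $y\approx1029.4$, against roots $\approx1030.9$ and $\approx1969.0$.
\end{itemize}
So estimates aimed at showing ``$y$ exceeds the larger root'' cannot succeed.

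What is actually needed is $y\le a+\beta/a$ for $a>0$. This lies below the smaller root $a+r_-$, where $r_\pm=\tfrac12(a\pm\sqrt{a^2-4\beta})$, because $r_-=\beta/r_+>\beta/a$. The paper obtains this bound by a comparison argument rather than Mills bounds. Set $\chi=af/(a^2+\beta)$. Then $\chi'-f=-f(a\beta'+2\beta^2)/(a^2+\beta)^2$, and $a\beta'+2\beta^2=6z^2+6z+\tfrac18>0$ with $z=k^2/v^4$. Together with $\chi(0+)=0$, integrating gives $\chi\le b$.

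Termwise Mills-ratio bounds are poorly suited to this step. At the point above, the classical Birnbaum lower and Sampford upper bounds, applied separately to the two tails, lose about $10^{-5}$ in $b/f$. The available margin is only about $1.4\times10^{-6}$, because $y$ is governed by the small difference $N^+-N^-$.
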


The proof requires two lemmas.

\subsection*{Lemma~A: log-concavity of \texorpdfstring{$b$}{b} in \texorpdfstring{$v$}{v}}

\begin{Lemma}\label{lem:logconcave}
The normalised Black price $b(x,v)$ is log-concave in $v$:
$g''(v)\le0$ for all $x\le0$, $v>0$.
\end{Lemma}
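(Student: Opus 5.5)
The plan is to avoid differentiating the log-price twice directly, and instead to write $b$ as the integral of its vega and transfer log-concavity from the integrand to the integral. For $x\le0$ we have $b(x,v)\to0$ as $v\to0^+$: when $x<0$ both $\Phi$ terms in~\eqref{eq:black} tend to $0$, and when $x=0$ we have $b=2\Phi(v/2)-1\to0$. Together with $\partial b/\partial v=f(v)=\phi(d_1(v))$ (used in the proof of Proposition~\ref{prop:derivatives}), this gives
\[
  b(x,v)=\int_0^v f(s)\,ds ,\qquad v>0 .
\]
The proof then has two steps. First, show that the vega $f$ is log-concave on $(0,\infty)$. Second, invoke the standard fact that an antiderivative starting from $0$ of a positive log-concave function is log-concave.

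\textbf{Step 1.} We have $\ln f(s)=-\tfrac12 d_1(s)^2-\tfrac12\ln(2\pi)$, so it suffices to show $(d_1^2)''=2(d_1')^2+2d_1d_1''\ge0$. With $d_1=x/s+s/2$ we get $d_1'=\tfrac12-x/s^2$ and $d_1''=2x/s^3$. Hence
\[
  (d_1')^2+d_1d_1''=\tfrac14-\frac{x}{s^2}+\frac{x^2}{s^4}+\frac{2x^2}{s^4}+\frac{x}{s^2}
  =\tfrac14+\frac{3x^2}{s^4}>0 .
\]
So $(\ln f)''<0$ and $f$ is strictly log-concave for every real $x$. The sign of $x$ is needed only for the boundary value $b(0^+)=0$.

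\textbf{Step 2.} Log-concavity of $b$ is equivalent to $b\,b''\le(b')^2$, that is,
\[
  f'(v)\int_0^v f(s)\,ds\le f(v)^2 .
\]
If $f'(v)\le0$ this is immediate, since $b>0$. If $f'(v)>0$, put $r=f'(v)/f(v)>0$. Concavity of $\ln f$ gives the tangent-line bound $f(s)\le f(v)e^{r(s-v)}$ for $s\in(0,v)$. Integrating over $(0,v)$ yields
\[
  b(v)\le \frac{f(v)}{r}\bigl(1-e^{-rv}\bigr)<\frac{f(v)^2}{f'(v)},
\]
which is the required inequality. This gives $g''=(\ln b)''\le0$ for all $x\le0$ and $v>0$.

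The only real obstacle is spotting the representation $b=\int_0^v\phi(d_1)$. A direct attack on \eqref{eq:d2}, which requires $(h+t)(h-t)/v\le g_0'$, i.e.\ $d_1d_2\,b\le v\,\phi(d_1)$, is awkward to sign in general. The integral route avoids this and needs only the short computation in Step 1, plus care that the boundary term $b(0^+)=0$ holds, which is exactly where the hypothesis $x\le0$ enters.
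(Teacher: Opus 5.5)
Your proof is correct, and it establishes exactly the paper's pointwise inequality by a different mechanism. Since $f'=af$ with $a=k^2/v^3-v/4$, your condition $f'(v)\,b(v)\le f(v)^2$ is the paper's $H(v)=f(v)-a(v)b(x,v)\ge0$. Your Step~1 quantity $\tfrac14+3x^2/s^4$ is the paper's $\beta=-a'$.

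The paper integrates up from $v=0$ via the identity $H'=\beta b>0$, so it has to compute the boundary limit of $H$. Because $a\sim k^2/v^3$ is singular there, it needs the Mills-ratio asymptotics $b=f\,(v^3/k^2+O(v^5))$ to get $H(0^+)=0$, plus a separate check $H(0^+)=\phi(0)$ when $k=0$.

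Your tangent-line majorant $f(s)\le f(v)e^{r(s-v)}$, with the split on the sign of $f'(v)$, needs only $b(0^+)=0$. It therefore avoids asymptotics altogether. It also shows the lemma is an instance of the general fact that a primitive of a log-concave function, vanishing at the left endpoint, is log-concave.

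What the paper's auxiliary-function route buys is uniformity with the rest of the argument. The same ``monotone auxiliary function plus boundary limit'' pattern yields the lower bound $b\ge af/(a^2+\beta)$ in Lemma~\ref{lem:scalar} and the lower bound $Q\ge f/W$ in Appendix~\ref{app:loggap-halley-convergence}. There a tangent-line majorant of the concave $\ln f$ would bound the integral from the wrong side.
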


\begin{proof}
Let $a(v)=f'(v)/f(v)$ denote the log-derivative of the vega.  Differentiating
$d_1=x/v+v/2$ gives $d_1'=-x/v^2+1/2$, hence
\begin{equation}
  a(v) = \frac{k^2}{v^3} - \frac{v}{4}.
\end{equation}
The second log-price derivative is $g''=g'(a-y)$.  Subtracting the two
log-derivative relations,
\begin{equation}
  q(v) = \frac{g''(v)}{g'(v)} = a(v) - y(v).
\end{equation}
Setting $H(v)=f(v)-a(v)b(x,v)$, we compute
\begin{equation}
  H'(v) = f'(v) - a'(v)b(x,v) - a(v)f(v) = \beta(v)\,b(x,v),
\end{equation}
where $\beta(v)=-a'(v)=3k^2/v^4+1/4>0$, using
$f'(v)=a(v)f(v)$ and $b'(x,v)=f(v)$.  As $v\downarrow0$,
$H(v)$ has a non-negative limit: for $k>0$, the standard Mills expansion
of the two cancelling normal tails gives $b(x,v)=f(v)(v^3/k^2+O(v^5))$,
so $H(v)\to0$; for $k=0$, $H(v)\to\phi(0)$.  Since $H'>0$, we conclude
$H(v)\ge0$, i.e.
$f(v)\ge a(v)b(x,v)$, equivalently $y(v)\ge a(v)$, equivalently
$q(v)=a(v)-y(v)\le0$.  Therefore $g''(v)=g'(v)q(v)\le0$.
\end{proof}

From Lemma~\ref{lem:logconcave}, $q(v)\le0$ and $\eta(v)>0$, so
$1-\frac12 q\eta\ge1>0$, hence $T(v)>v$.  This proves the left
inequality in~\eqref{eq:chebyshev-bound}.

\subsection*{Lemma~B: scalar curvature inequality}

\begin{Lemma}\label{lem:scalar}
For all $x\le0$, $v>0$, with $y(v)=g'(v)$, $a(v)=k^2/v^3-v/4$,
$\beta(v)=3k^2/v^4+1/4$,
\begin{equation}\label{eq:scalar}
  y^2 - 3ay + 2a^2 + \beta \ge 0.
\end{equation}
\end{Lemma}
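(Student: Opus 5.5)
The plan is to rewrite the inequality in terms of the non-negative quantity $u(v)=y(v)-a(v)=-q(v)$ supplied by Lemma~\ref{lem:logconcave}. Since $y=u+a$, we have $y^2-3ay+2a^2=(y-a)(y-2a)=u(u-a)$, so \eqref{eq:scalar} is equivalent to
\[
  Q(v):=u^2-a\,u+\beta\ \ge 0 .
\]
Two cases are immediate. If $a(v)\le 0$, then $-au\ge0$ and $\beta>0$, so $Q>0$. If $u\ge a$, then $u(u-a)\ge0$ and again $Q>0$. So the only region needing work is $a>0$ and $0\le u<a$. The condition $a>0$ means $v^4<4k^2$, which forces $k>0$ and small-to-moderate $v$.

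Next I would turn $Q$ into a differential statement. From $y'=y(a-y)$ (since $g''=g'(a-y)$) and $a'=-\beta$, we get
\[
  u'=y'-a'=-u^2-au+\beta .
\]
Hence $Q=u'+2u^2$. When $u>0$, the substitution $w=1/u=b/H$, with $H=f-ab$ as in the proof of Lemma~\ref{lem:logconcave}, gives
\[
  w'=1+aw-\beta w^2 .
\]
Define $P:=w^2Q=\beta w^2-aw+1$; then $w'=2-P$. The target becomes $P\ge0$ on $(0,\infty)$.

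I would prove $P\ge0$ by a barrier (first-crossing) argument in $v$. As $v\downarrow0$ with $k>0$, the Mills expansion already used in Lemma~\ref{lem:logconcave} gives $b=f\,(v^3/k^2+O(v^5))$. Combined with $H'=\beta b$, this yields $H\sim f\cdot O(v^5/k^2)$ and hence an explicit leading behaviour of $w=b/H$, from which $P(v)\to$ a positive limit can be checked (for $k=0$ the case $a\le0$ already covers everything). Suppose $P$ first vanishes at some $v_1$. Then $P'(v_1)\le0$, and necessarily $a(v_1)>0$. At such a point $w'=2$ and $a=\beta w+1/w$. Substituting these into $P'=\beta'w^2+\beta w+w'(2\beta w-a)$ with $\beta'=-12k^2/v^5$ gives
\[
  P'(v_1)=\frac{1}{w}\bigl(\beta' w^3+3\beta w^2-2\bigr).
\]
So it suffices to show $\beta'w^3+3\beta w^2-2>0$ whenever $P=0$ and $a>0$.

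The main obstacle is this last sign check, because $\beta'<0$ competes with $3\beta w^2$. At a touching point the relation $a=\beta w+1/w$ ties $w$ to $v$ and $k$: $w$ is a root of $\beta w^2-aw+1=0$, and $a^2\ge4\beta$ is required for such a root to exist. I would first substitute the explicit forms of $a$ and $\beta$, parametrise by $r=k/v^2$, and try to show the cubic in $w$ is positive on the admissible root branch. If that fails, I would supplement it with an a priori two-sided Mills-ratio bound on $w=b/H$ (for instance from the $v^3/k^2$ expansion, with explicit remainder inequalities) to exclude the bad branch. As a last resort, I would fall back on a direct case split: $u\ge a/2$ gives $Q\ge\beta-a^2/4+\ldots$, which handles moderate $r$, and the Mills asymptotics handle the small-$v$, large-$r$ regime.
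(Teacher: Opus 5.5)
Your reduction is correct and matches the paper's first step. You set $u=y-a\ge0$, reduce to $Q=u^2-au+\beta\ge0$, and dispose of the easy cases $a\le0$ and $u\ge a$. Your identities $u'=-u^2-au+\beta$, $w'=2-P$ and $P'(v_1)=(\beta'w^3+3\beta w^2-2)/w$ are also right.

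\textbf{The gap.} The only nontrivial case is never settled, and the sufficient condition you reduce it to is false as stated. Take $k=100$, $v=10$, so $a=7.5$, $\beta=3.25$, $\beta'=-1.2$. Then $P=0$ has the two roots $w_-\approx0.142$ and $w_+\approx2.17$, and at $w_-$ one gets $\beta'w^3+3\beta w^2-2\approx-1.81$. So a first-crossing argument on $P=0$ cannot be closed from local information at the crossing.

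To repair it you would need three further facts:
\begin{itemize}
\item that $w>w_+$ as $v\downarrow0$. This needs a higher-order expansion, because the leading terms $k^2/(3v^2)$ of $\beta w^2$ and $aw$ cancel; one finds $w=v/3+2v^3/(3k^2)+\cdots$ against $w_+=v/3-v^3/k^2+\cdots$, and $P\to5/3$. Note also that $H\sim3fv^2/k^2$, not $f\,O(v^5/k^2)$;
\item that $a^2-4\beta$ changes sign only once on $\{a>0\}$, so the roots cannot vanish and reappear with $w$ below $w_-$;
\item positivity on the $w_+$ branch. This does hold: with $w=v\omega$ and $z=k^2/v^4$ the condition becomes $-12z\omega^2+(15z+\tfrac{5}{4})\omega-2(z-\tfrac{1}{4})>0$. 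This is a concave quadratic in $\omega$, positive at both ends of the branch interval $[\omega_m/2,\omega_m)$, where $\omega_m=(z-\tfrac{1}{4})/(3z+\tfrac{1}{4})$.
\end{itemize}
None of this is in your proposal.

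Your last-resort split also fails. The bound $Q\ge\beta-a^2/4$ is vacuous when $a^2>4\beta$. In the example that happens at the moderate value $r=k/v^2=1$, where the true $Q\approx0.35$ is small compared with $\beta=3.25$. That regime is neither a small-$v$ one nor one that crude bounds can handle.

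\textbf{The missing idea.} What you lack is an a priori one-sided bound that removes the branch analysis entirely: for $a>0$, $u\le\beta/a$, i.e.\ $w\ge a/\beta$. The paper obtains it from the comparison function $A=af/(a^2+\beta)$.
\begin{itemize}
\item Using $f'=af$ and $a'=-\beta$, one gets $A'-f=-f\,(a\beta'+2\beta^2)/(a^2+\beta)^2$.
\item With $z=k^2/v^4$, $a\beta'+2\beta^2=6z^2+6z+\tfrac{1}{8}>0$.
\item Since $A(0+)=0=b(0+)$, integrating gives $A\le b$, i.e.\ $y\le a+\beta/a$.
\end{itemize}
Then $Q=u^2+(\beta-au)\ge u^2\ge0$ at once; the paper phrases this last step as a two-case parabola argument.

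In your language, this is a barrier on the curve $w=a/\beta$ rather than on $P=0$. On that curve $w'=1$ and $(w-a/\beta)'=(2\beta^2+a\beta')/\beta^2>0$, so it can only be crossed upward and no branch bookkeeping is needed. The paper's integrated form needs only the trivial limit $A(0+)=0$, not an asymptotic sign. And since $a/\beta=w_++w_->w_+$, this bound also explains why the true $w$ lies above both roots of $P$.
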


\begin{proof}
Set $r=y-a\ge0$ (from Lemma~\ref{lem:logconcave}).  Substituting
$y=a+r$ into~\eqref{eq:scalar} gives
\begin{equation}
  y^2-3ay+2a^2+\beta = r^2-ar+\beta.
\end{equation}
If $a\le0$, then
\begin{equation}
  r^2-ar+\beta\ge\beta>0,
\end{equation}
so~\eqref{eq:scalar} follows immediately.  It remains to consider
$a>0$.  Define $A(v)=af(v)/(a^2+\beta)$.  Using $f'=af$ and
$a'=-\beta$, a direct calculation gives
\begin{equation}
  A'(v)-f(v)
  = -f(v)\,\frac{a\beta'+2\beta^2}{(a^2+\beta)^2}.
\end{equation}
Substituting $a=v(z-1/4)$, $\beta=3z+1/4$, $\beta'=-12z/v$ with
$z=k^2/v^4\ge0$ gives
\begin{equation}
  a\beta' + 2\beta^2
  = -12z^2 + 3z + 18z^2 + 3z + \tfrac{1}{8}
  = 6z^2 + 6z + \tfrac{1}{8} > 0.
\end{equation}
Hence $A'\le f$.  Because $a>0$ implies $k>0$, we also have
$a(u)>0$ for every $0<u\le v$ since $a'(u)=-\beta(u)<0$, and
$A(0+)=0$.  Integrating from $0$ to $v$ yields
$A(v)\le b(x,v)$, i.e.\ $af/(a^2+\beta)\le b$, i.e.\ $y=f/b\le (a^2+\beta)/a=a+\beta/a$,
hence $r\le\beta/a$.

\textbf{Case 1}: $\beta\ge a^2/4$.  Then $r^2-ar+\beta=(r-a/2)^2+\beta-a^2/4\ge0$.

\textbf{Case 2}: $\beta<a^2/4$.  On $r\in[0,\beta/a]$,
the parabola $D(r)=r^2-ar+\beta$ has its minimum at $r=a/2>\beta/a$, so $D$ is
decreasing on $[0,\beta/a]$.  Therefore
\begin{equation}
  D(r)\ge D\!\left(\tfrac{\beta}{a}\right)
  =\frac{\beta^2}{a^2}-\beta+\beta=\frac{\beta^2}{a^2}>0.
\end{equation}
\end{proof}

\subsection*{Proof of Theorem~\ref{thm:chebyshev-monotone}}

\begin{proof}
Let $\psi$ be the inverse branch of $g$ that maps $(-\infty,0)$ to
$(0,v_*)$.  Setting $u=g(v)<0$ and $s=-u>0$, we have $v=\psi(u)$ and
$v_*=\psi(0)$.  The inverse
derivatives are $\psi'=1/g'$, $\psi''=-g''/{g'}^3$, and
\begin{equation}
  \psi'''(u) = \frac{3{g''}^2 - g'g'''}{g'^5}.
\end{equation}

Observe that the Chebyshev displacement equals the second-order Taylor
approximation of $\psi$ from $u$ to $0$:
\begin{align}
  T(v)-v
  &= s\psi'(u) + \tfrac{1}{2}s^2\psi''(u) \notag\\
  &= \frac{-g}{g'} - \frac{g''g^2}{2g'^3}
   = \eta\bigl(1-\tfrac{1}{2}q\eta\bigr).
\end{align}
Taylor's theorem gives
\begin{equation}
  v_*-v = \psi(0)-\psi(u)
  = s\psi'(u)+\tfrac{1}{2}s^2\psi''(u)+\tfrac{1}{6}s^3\psi'''(\xi)
  = T(v)-v + \tfrac{s^3}{6}\psi'''(\xi),
\end{equation}
for some $\xi\in(u,0)$.  Thus $T(v)-v\le v_*-v$ follows once
$\psi'''\ge0$ on $(u,0)$.  Since the formula for $\psi'''$ is evaluated
at $w=\psi(\xi)\in(v,v_*)$, it is enough to verify
$3g''^2-g'g'''\ge0$ for all positive total volatilities.  Using
$\delta_2=g''/g'=q$ and
$\delta_3=g'''/g'$ from Proposition~\ref{prop:derivatives},
\begin{equation}
  3g''^2 - g'g''' = g'^2(3q^2-\delta_3).
\end{equation}
From~\eqref{eq:d2}--\eqref{eq:d3},
\begin{equation}
  \delta_3 = \frac{-3h^2-t^2+(h^2-t^2)^2}{v^2} - 3yq - y^2.
\end{equation}
Substituting $a = (h^2-t^2)/v = k^2/v^3 - v/4$, $a'=-\beta$,
$q=a-y$, and $\delta_3 = a' + a^2 - 3yq - y^2 = -\beta+a^2-3yq-y^2$
yields, after collecting terms,
\begin{equation}
  3q^2-\delta_3 = 3(a-y)^2-(-\beta+a^2-3y(a-y)-y^2)
  = 2q^2 - q' = y^2-3ay+2a^2+\beta.
\end{equation}
By Lemma~\ref{lem:scalar}, this is non-negative for all $x\le0$, $v>0$.
Hence $\psi'''(\xi)\ge0$, and the right inequality
$T(v)-v\le v_*-v$ holds.

The left inequality $T(v)>v$ follows from $q\le0$ (Lemma~\ref{lem:logconcave})
and $\eta>0$.

Both inequalities proven, the iterates satisfy
$v_0\le v_1\le v_2\le\cdots\le v_*$.  The sequence is monotone and bounded
above, so it converges to some $\ell\le v_*$.  If $\ell<v_*$ then $g(\ell)<0$
and $T(\ell)>\ell$, contradicting $v_n\to\ell$ and $v_{n+1}=T(v_n)\to\ell$.
Hence $\ell=v_*$.

Local cubic convergence follows from the standard Chebyshev error
expansion, since $g'(v_*)>0$ guarantees a simple root.
\end{proof}

\begin{Remark}
The proof is entirely real-analytic and holds for the lower-tail log-price
objective for all
$x\le0$, $v>0$, $c\in(0,1)$, with no compact-domain restriction.  The
only prerequisite beyond Choi's lower-bound seed
(Appendix~\ref{app:choi-l3}) is the log-concavity of $b$ in $v$
(Lemma~\ref{lem:logconcave}) and the scalar curvature
inequality~\eqref{eq:scalar} (Lemma~\ref{lem:scalar}).  Neither lemma
requires any upper-price exclusion; the saturated-price regime
$c\approx1$ is covered because both lemmas hold pointwise for all
admissible $(x,v)$ for the $\ln(c)$ transform.  The complementary
$\ln(1-c)$ branch is a separate conditioning choice and is not a consequence of
this monotonicity theorem.
\end{Remark}

\section[\appendixname~\thesection]{Monotone Halley Convergence for the Complementary Objective}
\label{app:loggap-halley-convergence}

This appendix records the complementary analogue that is available when the
upper-tail objective is refined by Halley's method rather than by the
Euler--Chebyshev polynomial correction.  It is stated separately because its
proof uses the negative Schwarzian of the complementary log objective, not the
lower-tail log-concavity and inverse-curvature argument of
Appendix~\ref{app:chebyshev-convergence}.
The Schwarzian enters only to prove that Halley is Newton's method applied to a
concave transformed residual; convexity of the original complementary log
objective fixes the sign of the step but is not enough by itself to rule out
overshoot.

Let $b(x,v)$ denote the normalised OTM Black price and set
\begin{equation}
  Q(x,v)=1-b(x,v).
\end{equation}
For a target $Q_*\in(0,1)$ define the increasing complementary log objective
\begin{equation}\label{eq:halley-loggap-objective}
  G(v)=\ln Q_* - \ln Q(x,v).
\end{equation}
The root $v_*$ is characterised by $Q(x,v_*)=Q_*$.  For $v<v_*$ we have
$Q(x,v)>Q_*$ and therefore $G(v)<0$.

Define
\begin{equation}
  \eta(v)=-\frac{G(v)}{G'(v)},\qquad R(v)=\frac{G''(v)}{G'(v)}.
\end{equation}
The standard Chebyshev--Halley curvature parameter is
$\lambda(v)=G(v)G''(v)/[G'(v)]^2=-R(v)\eta(v)$.
The Halley map can then be written in the equivalent forms
\begin{equation}\label{eq:halley-loggap-map}
  T_H(v)
  =v-\frac{G(v)/G'(v)}{1-\tfrac12G(v)G''(v)/[G'(v)]^2}
  =v+\frac{\eta(v)}{1-\tfrac12\lambda(v)}
  =v+\frac{\eta(v)}{1+\tfrac12 R(v)\eta(v)}.
\end{equation}

\begin{Theorem}\label{thm:loggap-halley-monotone}
For every $x\le0$, every $Q_*\in(0,1)$, and every $v\in(0,v_*)$, the Halley
map~\eqref{eq:halley-loggap-map} satisfies
\begin{equation}
  0<T_H(v)-v\le v_*-v.
\end{equation}
Consequently, the iterates $v_{n+1}=T_H(v_n)$ from any starting value
$v_0\in(0,v_*)$ are monotone increasing, bounded above by $v_*$, and converge
to $v_*$.  The local convergence at the root is cubic.
\end{Theorem}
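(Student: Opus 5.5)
The plan is to recognise Halley's map as Newton's method applied to a transformed residual, and to show that this residual is increasing and concave on $(0,v_*)$. Set
\[
  \varphi(v)=\frac{G(v)}{\sqrt{G'(v)}} .
\]
A direct computation (using only $G'>0$) gives
\[
  \varphi'=\sqrt{G'}\Bigl(1-\tfrac12\lambda\Bigr),\qquad
  \varphi''=-\tfrac12\,G\,(G')^{-1/2}\,\mathcal S(G),
\]
where $\mathcal S(G)=G'''/G'-\tfrac32(G''/G')^2$ is the Schwarzian derivative. It follows that $-\varphi/\varphi'=\eta/(1-\lambda/2)$, which is exactly the Halley step in~\eqref{eq:halley-loggap-map}.

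Two sign facts on $(0,v_*)$, where $G<0$, then suffice.
\begin{itemize}
  \item \emph{Convexity, $G''\ge0$.} This gives $\lambda\le0$, so $1-\lambda/2\ge1$, hence $\varphi'>0$ and the step is positive. Since $\varphi<0$ below the root, this proves $T_H(v)>v$.
  \item \emph{Negative Schwarzian, $\mathcal S(G)\le0$.} Because $G<0$, this gives $\varphi''\le0$, so $\varphi$ is concave. The tangent-line inequality $0=\varphi(v_*)\le\varphi(v)+\varphi'(v)(v_*-v)$ then yields $T_H(v)-v=-\varphi/\varphi'\le v_*-v$.
\end{itemize}
Monotone convergence to $v_*$ follows exactly as in the proof of Theorem~\ref{thm:chebyshev-monotone}: a limit $\ell<v_*$ would satisfy $T_H(\ell)>\ell$, a contradiction. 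Cubic local convergence is the standard Halley property at the simple root, since $G'(v_*)>0$.

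To establish the two sign facts, rewrite everything in the notation of Appendix~\ref{app:chebyshev-convergence}. Let $f=\phi(d_1)$, $a=f'/f=k^2/v^3-v/4$ and $\beta=-a'$, and put $y_Q=f/Q$. Since $Q'=-f$, we have $G'=y_Q$ and
\[
  R=\frac{G''}{G'}=a+y_Q .
\]
Differentiating gives $R'=-\beta+ay_Q+y_Q^2$, so
\[
  \mathcal S(G)=R'-\tfrac12R^2=\tfrac12\bigl(y_Q^2-a^2-2\beta\bigr).
\]
The two targets therefore become the scalar inequalities
\[
  y_Q\ge-a\quad\text{(convexity)},\qquad y_Q^2\le a^2+2\beta\quad\text{(negative Schwarzian)}.
\]

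Both will be proved by the comparison-function technique used in Lemmas~\ref{lem:logconcave} and~\ref{lem:scalar}. The difference is that we integrate from $+\infty$, using $Q(v)=\int_v^\infty f$ and $Q(\infty)=0$.
\begin{itemize}
  \item \emph{Convexity.} When $a\ge0$ the inequality $y_Q\ge-a$ is trivial. When $a<0$ it reads $Q\le f/(-a)$. Consider $B=-f/a-Q$, whose derivative is $B'=-f\beta/a^2$. Since $a'=-\beta<0$, once $a<0$ it stays negative for all larger $v$. Hence $B$ decreases on that range and tends to $0$ as $v\to\infty$, so $B\ge0$.
  \item \emph{Negative Schwarzian.} Set $A=f/\sqrt{a^2+2\beta}$, so the target is $Q\ge A$. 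It suffices to show $A'\ge-f$. Then $(Q-A)'\le0$, and together with $Q-A\to0$ at infinity this gives $Q\ge A$. Writing $\beta=3z+\tfrac14$ and $a=v(z-\tfrac14)$ with $z=k^2/v^4$, as in Lemma~\ref{lem:scalar}, the condition $A'\ge-f$ reduces to the positivity of an explicit expression in $z\ge0$ and $v$.
\end{itemize}

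The main obstacle is this last algebraic verification. It involves checking
\[
  (a^2+2\beta)^{3/2}+a(a^2+2\beta)+a\beta-\beta'\ \ge\ 0 ,
\]
which contains square roots. I would first handle the easy case $a\ge0$, where every term is nonnegative except possibly $-\beta'$, and $-\beta'=12z/v>0$. For $a<0$ I would compare $(a^2+2\beta)^{3/2}$ against $|a|(a^2+2\beta)$ and bound the remainder, squaring if necessary.
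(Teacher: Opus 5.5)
Your proposal follows the paper's proof essentially step for step: the same transformed residual $G/\sqrt{G'}$ (Halley as Newton on a concave increasing function), the same reduction of convexity to $y_Q\ge -a$, and the same reduction of the negative Schwarzian to $Q\ge f/\sqrt{a^2+2\beta}$ by integrating from infinity, leading to the identical bracket $W^3+aW^2+a\beta-\beta'$ with $W=\sqrt{a^2+2\beta}$. The only difference is that the paper proves convexity globally with $H=f+aQ=-aB$, which avoids your case split on the sign of $a$. The case $a=-s<0$ that you left open closes in one line, as in the paper: $W-s=2\beta/(W+s)$ gives $W^2(W-s)-s\beta=\beta(W-s)(2W+s)/(W+s)>0$, and $-\beta'=12k^2/v^5\ge 0$.
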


\begin{proof}
Write $k=-x\ge0$, $d_1=x/v+v/2$, and
\begin{equation}
  f(v)=\phi(d_1(v)),\qquad y(v)=\frac{f(v)}{Q(x,v)}.
\end{equation}
Since the Black vega is $\partial b/\partial v=f(v)$, we have
$Q'(x,v)=-f(v)$ and therefore
\begin{equation}
  G'(v)=y(v)>0.
\end{equation}
In particular, $Q(x,v)$ is continuous and strictly decreasing from $1$ as
$v\downarrow0$ to $0$ as $v\to\infty$, so the equation $Q(x,v)=Q_*$ has a
unique solution $v_*$ for every $Q_*\in(0,1)$.
Let
\begin{equation}
  a(v)=\frac{f'(v)}{f(v)}=\frac{k^2}{v^3}-\frac{v}{4},
  \qquad
  \beta(v)=-a'(v)=\frac{3k^2}{v^4}+\frac14>0.
\end{equation}
Differentiating $y=f/Q$ gives
\begin{equation}\label{eq:loggap-convex-ratio}
  \frac{G''(v)}{G'(v)}=a(v)+y(v).
\end{equation}

It remains to show that the right-hand side of
\eqref{eq:loggap-convex-ratio} is non-negative.  Define
\begin{equation}
  H(v)=f(v)+a(v)Q(x,v).
\end{equation}
Using $f'=af$, $Q'=-f$, and $a'=-\beta$, we obtain
\begin{equation}
  H'(v)=f'(v)+a'(v)Q(x,v)+a(v)Q'(x,v)=-\beta(v)Q(x,v)<0.
\end{equation}
The Mills expansion of the two complementary normal tails gives
$Q(x,v)=f(v)(1/d_1+1/(-d_2)+o(1/v))$ as $v\to\infty$, where
$d_2=d_1-v$.  Since $a(v)\sim -v/4$, this implies $H(v)\to0$ as
$v\to\infty$.  Because $H$ is strictly decreasing and tends to zero at
infinity, $H(v)>0$ for every finite $v>0$.  Hence
\begin{equation}
  R(v)=\frac{G''(v)}{G'(v)}=a(v)+y(v)=\frac{H(v)}{Q(x,v)}>0.
\end{equation}
Thus $G$ is increasing and convex.

We next show that the Schwarzian derivative of $G$ is negative.  Let
\begin{equation}
  W(v)=\sqrt{a(v)^2+2\beta(v)},\qquad M(v)=\frac{f(v)}{W(v)}.
\end{equation}
Since $Q'(x,v)=-f(v)$ and $Q(x,v)\to0$ as $v\to\infty$, we have
\begin{equation}
  Q(x,v)=\int_v^\infty f(u)\,du.
\end{equation}
Moreover $M(v)\to0$ as $v\to\infty$.  A direct differentiation gives
\begin{equation}\label{eq:hazard-bound-derivative}
  M'(v)+f(v)=\frac{f(v)}{W(v)^3}
  \left[W(v)^2\bigl(W(v)+a(v)\bigr)+a(v)\beta(v)-\beta'(v)\right].
\end{equation}
The bracket is positive.  If $a\ge0$, then all terms in the bracket are
non-negative and $-\beta'=12k^2/v^5\ge0$.  If $a<0$, write $s=-a>0$.
Since $W^2=s^2+2\beta$,
\begin{equation}
  W^2(W-s)-s\beta
  =\beta\left(\frac{2W^2}{W+s}-s\right)
  =\beta W\frac{2-s/W-(s/W)^2}{1+s/W}>0,
\end{equation}
and again $-\beta'\ge0$.  Hence $M'(v)+f(v)>0$.  Integrating this inequality
from $v$ to infinity yields
\begin{equation}
  \frac{f(v)}{W(v)}=M(v)\le Q(x,v),
  \qquad\text{so}\qquad
  y(v)=\frac{f(v)}{Q(x,v)}\le W(v).
\end{equation}
Therefore
\begin{equation}\label{eq:loggap-schwarzian-negative}
  \mathcal{S}G(v)
  =\frac{G'''(v)}{G'(v)}-\frac{3}{2}\left(\frac{G''(v)}{G'(v)}\right)^2
  =\frac12\left(y(v)^2-a(v)^2\right)-\beta(v)\le0.
\end{equation}
The inequality is strict for finite $v$ because $M'(v)+f(v)>0$ on every
non-empty interval.

Finally set
\begin{equation}
  \Psi(v)=\frac{G(v)}{\sqrt{G'(v)}}.
\end{equation}
Halley's method applied to $G$ is Newton's method applied to $\Psi$:
\begin{equation}
  v-\frac{\Psi(v)}{\Psi'(v)}
  =v-\frac{G(v)/G'(v)}{1-\tfrac12G(v)G''(v)/[G'(v)]^2}
  =T_H(v).
\end{equation}
For $v<v_*$, $G(v)<0$ and $R(v)>0$, so
\begin{equation}
  \Psi'(v)=\sqrt{G'(v)}
  \left(1-\frac12\frac{G(v)G''(v)}{[G'(v)]^2}\right)>0.
\end{equation}
Furthermore,
\begin{equation}
  \Psi''(v)=-\frac12\frac{G(v)}{\sqrt{G'(v)}}\,\mathcal{S}G(v)<0,
\end{equation}
by~\eqref{eq:loggap-schwarzian-negative}.  Thus $\Psi$ is increasing and
concave on $(0,v_*)$, with $\Psi(v)<0$ and $\Psi(v_*)=0$.  Concavity gives
\begin{equation}
  0=\Psi(v_*)\le \Psi(v)+\Psi'(v)(v_*-v),
\end{equation}
so the Newton displacement for $\Psi$ satisfies
\begin{equation}
  0<T_H(v)-v=-\frac{\Psi(v)}{\Psi'(v)}\le v_*-v.
\end{equation}

The iterates are therefore monotone increasing and bounded above by $v_*$, so
they converge to some limit $\ell\le v_*$.  If $\ell<v_*$, continuity of the
Halley map and the strict positivity of the displacement on $(0,v_*)$ would
give $T_H(\ell)>\ell$, contradicting $v_{n+1}-v_n\to0$.  Hence $\ell=v_*$.
The usual Halley error expansion applies at the simple root, where
$G'(v_*)>0$, and gives cubic local convergence.
\end{proof}

\vspace{6pt}

\authorcontributions{Not applicable (single author).}

\funding{This research received no external funding.}

\dataavailability{The source code used to generate the benchmark
tables, including the solver variants, benchmark harnesses, and dataset
builders, is available in the accompanying source distribution.  The exact
dataset grids are reproduced in Appendix~\ref{app:benchmark-construction}.}

\acknowledgments{The author thanks Gary Kennedy for a thorough review and
excellent feedback.}

\conflictsofinterest{The author declares no conflicts of interest.}

\begin{adjustwidth}{-\extralength}{0cm}

\reftitle{References}

\end{adjustwidth}

\end{document}